\DeclareRobustCommand{\cev}[1]{%
  \mathpalette\do@cev{#1}%
}
\newcommand{\do@cev}[2]{%
  \fix@cev{#1}{+}%
  \reflectbox{$\m@th#1\vec{\reflectbox{$\fix@cev{#1}{-}\m@th#1#2\fix@cev{#1}{+}$}}$}%
  \fix@cev{#1}{-}%
}
\newcommand{\fix@cev}[2]{%
  \ifx#1\displaystyle
    \mkern#23mu
  \else
    \ifx#1\textstyle
      \mkern#23mu
    \else
      \ifx#1\scriptstyle
        \mkern#22mu
      \else
        \mkern#22mu
      \fi
    \fi
  \fi
}
\newtheorem{thm}{Theorem}
\newtheorem{prop}[thm]{Proposition}
\newtheorem{rem}[thm]{Remark}
\newtheorem{criterion}{Criterion}
\theoremstyle{definition}
\newtheorem{example}{Example}
\newcommand{\bx}{\begin{example}}
\newcommand{\ex}{\end{example}}
\newcommand{\N}{\mathcal{N}}
\newcommand{\M}{\mathcal{M}}
\newcommand{\B}{\mathcal{B}}
\newcommand{\tr}{\text{\rm Tr}}
\newcommand{\wz}[1]{\textcolor{purple}{#1}}
\begin{document}
\preprint{APS/123-QED}

\newcommand{\Q}{\ensuremath{Q^{(1)}}}
\newcommand{\be}{\begin{equation}}
\newcommand{\ee}{\end{equation}}

\title{On small perturbations of coherent information}

\date{\today}


\author{Zhen Wu}
\affiliation{School of Mathematics and Statistics, Hainan University, Haikou, Hainan Province, 570228, China.}
\affiliation{School of Mathematical Sciences, MOE-LSC, Shanghai Jiao Tong University, Shanghai, 200240, China}
\author{Zhihao Ma}
\affiliation{School of Mathematical Sciences, MOE-LSC, Shanghai Jiao Tong University, Shanghai, 200240, China}
\affiliation{Shanghai Seres Information Technology Co., Ltd, Shanghai 200040, China}
\affiliation{Shenzhen Institute for Quantum Science and Engineering, Southern University of Science and Technology, Shenzhen 518055, China}
\author{James Fullwood}
\email{fullwood@hainanu.edu.cn}
\affiliation{School of Mathematics and Statistics, Hainan University, Haikou, Hainan Province, 570228, China.}

\begin{abstract}
Quantum capacity quantifies the amount of quantum information that can be transmitted by a quantum channel with an arbitrary small probability of error. Mathematically, the quantum capacity is given by an asymptotic formula involving the one-shot quantum capacity of the associated channel, which, due to purely quantum effects such as superadditivity of one-shot quantum capacity, is rarely computable. The one-shot quantum capacity is mathematically characterized in terms of optimizing an entropic quantity referred to as \emph{coherent information} over all possible input states of a channel, the computation of which also tends to be intractable due to the difficulty of optimizing the coherent information. In this work, we develop perturbative methods for analyzing the behavior of first and second order derivatives of coherent information of a quantum channel with respect to small perturbations of the input state. By doing so, we are able to derive three general criteria for determining whether an input state yields suboptimal coherent information. We then show how our criteria yield sufficient conditions for superadditivity of one-shot quantum capacity, and also for detecting a positive gap between one-shot private capacity and one-shot quantum capacity. The utility of our criteria is illustrated through examples, which yield new results regarding the one-shot quantum capacity of qubit depolarizing channels, Pauli channels and dephrasure channels.   
\end{abstract}

\maketitle

\section{Introduction}
Classical information theory was established by the pioneering work of Shannon, who in a single paper laid the theoretical foundation for nearly all of modern communication technology~\cite{Shannon}. A fundamental result of Shannon was the mathematical characterization of the maximum rate of information that can be transmitted by a classical noisy channel $N$ with an arbitrarily low probability of error, which is referred to as the \emph{channel capacity} of $N$. If $X$ and $Y$ denote the input and output alphabets of a classical channel $N$, and $\boldsymbol{p}(x)$ is a distribution on $X$, then the \emph{mutual information} between sender and receiver is defined to be the non-negative real number $I(X:Y)$ given by
\begin{equation} \label{MTLINFXM67}
I(X:Y) = \sum_{x,y} \boldsymbol{p}(x,y) \log \frac{\boldsymbol{p}(x,y)}{\boldsymbol{p}(x)\boldsymbol{p}(y)}\, ,
\end{equation}
where $\boldsymbol{p}(x,y)=\boldsymbol{p}(x)\boldsymbol{p}(y|x)$ is the associated joint distribution and $\boldsymbol{p}(y)$ is the induced distribution $Y$. What Shannon proved is that the channel capacity of $N$ is the real number $C_{\mathrm{Shan}}(N)$ given by
\begin{equation} \label{CLSCPXTY87}
C_{\mathrm{Shan}}(N) = \max_{\boldsymbol{p}}\ I(X:Y)\,,
\end{equation}
where the maximum is taken over all input distributions $\boldsymbol{p}$ for the channel $N$.  

As we developed methods to better manipulate and control physical systems at the micro-realm, it was realized that subatomic particles---such as photons traveling through an optical fiber---could be used to transmit and process information in a way that exploits characteristically quantum phenomena, such as superposition and entanglement. This then paved the way for the study of quantum information theory~\cite{Wilde,WatrousTQI}, which has established many theoretical results that mark a striking departure from the classical theory of information. In particular, certain quantum communication protocols offer a remarkable advantage over their classical counterparts~\cite{BB84,Simpleproof,unconditionalsecurity,SuperAddIcPauli2}, and the systematic study of such advantages and how to produce new ones provides the theoretical foundation for much of quantum information science.  

Similar to the definition of classical capacity, quantum capacity quantifies the maximum rate of quantum information that can be transmitted through a quantum channel with arbitrarily low probability of error. However, due to quantum effects, the mathematical characterization of quantum capacity is more involved than the classical capacity, as it is given by the following regularized expression~\cite{LSD1,LSD2,LSD3}
\begin{equation}\label{regexp}
    Q(\N) = \lim_{n\rightarrow \infty} \frac{Q^{(1)}(\N^{\otimes n})}{n}\,, \\
\end{equation}
where $\Q(\N)$ is the one-shot quantum capacity of a quantum channel $\N$, which is given by  
\begin{equation}\label{Q1}
    \Q(\N) = \max_{\rho} I_c(\rho,\N)\, ,
\end{equation}
where $I_c(\rho,\N)$ is the \emph{coherent information} of $\N$ with respect to the input state $\rho$. The coherent information $I_c(\rho,\N)$ is a measure of correlation between the output system and the environment after using $\N$ to transmit $\rho$, and is a fundamental measure of quantum information~\cite{CoheInf}.

The formula for the one-shot quantum capacity \eqref{Q1} may be viewed as a quantum analog of formula \eqref{CLSCPXTY87} for the classical capacity, and as such, coherent information is often viewed as a quantum analog of the classical mutual information \eqref{MTLINFXM67}. However, while the classical capacity is additive, i.e., $C_{\mathrm{Shan}}(N_1\times N_2)=C_{\mathrm{Shan}}(N_1)+C_{\mathrm{Shan}}(N_2)$, the one-shot quantum capacity is \emph{superadditive}, since for general quantum channels $\N_1$ and $\N_2$ we instead have the inequality
\begin{equation} \label{superadd}
\Q(\N_1\otimes \N_2)\geq \Q(\N_1)+\Q(\N_2)\, .
\end{equation}
Quantum channels $\N_1$ and $\N_2$ for which the above inequality \eqref{superadd} is strict exhibit holistic properties when combined in parallel which cannot be reduced to properties of its individual constituents, which is a characteristic feature of the transmission of quantum information. In particular, there are many known examples of quantum channels with superadditive one-shot quantum capacity~\cite{SuperAddIcPauli2,IEEEPos,geneticalgorithms,Filippov_2021}, and there even exists channels $\N_1$ and $\N_2$ with $Q(\N_1)=Q(\N_2) = 0$ such that $ Q(\N_1\otimes \N_2)>0$, a phenomenon first discovered by Smith and Yard which is now referred to as \emph{superactivation}~\cite{superactivation,Superact2}. Moreover, for every $n>0$, there exists a quantum channel $\N_{n}$ such that $\Q(\N_{n}^{\otimes m}) = 0$ for all $m \leq n$, yet $Q(\N_{n})>0$~\cite{UnboundedQ}. In addition, superadditivity appears in many information transmission scenarios using quantum channels, indicating that the corresponding channel capacity is non-additive in general. For instance, there are examples showing that superadditivity exists in one-shot private capacity $P^{(1)}$~\cite{StructuredCodes,SuperAddIp1}, and private capacity $P$~\cite{SuperAddP2,SuperAddP}. Furthermore, such superadditive phenomena renders quantum channel capacity to be an elusive quantity which, apart from some special cases, is almost never explicitly computable. Not only this, but even calculating one-shot capacity also requires non-convex optimization over a continuum of states, which is usually an intractable problem. 

Although superadditivity generally holds, it's still possible to find special quantum channels $\B_1$ and $\B_2$ satisfying \emph{additivity}: $Q^{(1)}(\B_1\otimes \B_2) = Q^{(1)}(\B_1) + Q^{(1)}(\B_2)$, which means that we can bypass regularized expressions and obtain the channel capacity by calculating the one-shot capacity of the channel (called single letter formula). For example, in scenarios where quantum channels are used to transmit quantum information, the degradable channel~\cite{LSD2,Structure} has been shown with additive coherent information when combined with another degradable channels, and only entanglement-breaking channels are known with additive coherent information and quantum capacity combined with any other channels~\cite{platypusPRL}.

Recently, Refs.~\cite{Singular,DetectingPositive_NPJ} propose sufficient conditions for determining the positivity of one-shot quantum capacity for generic quantum channels. Given a general quantum channel $\N$, their fundamental idea involves perturbing a pure state to obtain a quantum state $\rho(\varepsilon)$, then considering the derivative of the coherent information for the channel $\N$ with respect to the state $\rho(\varepsilon)$, $I_c'(\varepsilon) := \frac{\mathrm{d} I_c[\N(\rho(\varepsilon))]}{\mathrm{d} \varepsilon}$. Since the coherent information of the general channel with respect to any pure state is zero, if this derivative is greater than zero, it proves that there exists a quantum state $\rho(\varepsilon)$ such that the coherent information $I_c(\rho,\N)$ is positive implying positivity of (one-shot) quantum capacity of the channel $\N$.

In this work, we further develop the perturbation methods employed in Refs.~\cite{Singular,DetectingPositive_NPJ} to derive general criteria for determining whether a quantum state yields suboptimal coherent information. Specifically, given a generic channel $\N$ and a quantum state $\rho$ (which may be mixed), we consider the perturbative expansion $\rho(\varepsilon)$ given in Eq.~\eqref{perstate} below, then compute the first and second orders derivative of the function $f(\varepsilon)$ which is defined as the difference in coherent information between the quantum state before and after perturbation. It's clear that the positivity of $f(\varepsilon)$ implies that there exists a quantum state $\rho(\varepsilon)$ with larger coherent information than $\rho$. Based on this observation, we provide general criteria for determining whether there exists a positive value of $f(\varepsilon)$ according to the positivity of its first and second order derivatives.

As an application of our results, we show how our criteria yield sufficient conditions for detecting superadditivity of one-shot quantum capacity. To illustrate the utility of our criteria, we apply them to qutrit platypus channels as studied in~\cite{Lovasz,Singular,platypusTIT,platypusPRL}, which turn out to yield some new results in such a context. For example, in the case of qutrit platypus channels combined with amplitude damping channels, we provide various input state forms in Eq.~\eqref{pertstateNs}, which realizes superadditive one-shot quantum capacity by leveraging Criterion~\ref{cri2}. 

Finally, we employ our criteria to establish a sufficient condition for there to be a positive gap between the one-shot private capacity $P^{(1)}$ of a quantum channel and its one-shot quantum capacity $\Q$, which never exceeds $P^{(1)}$. Such a result may be viewed as an extension of results of Watanabe in Ref.~\cite{Watanabe}, which also established conditions for detecting a positive gap between one-shot private and quantum capacities for a generic quantum channel. Our results however have a wider range of applicability, since for example they can be applied to Pauli channels and dephrasure channels~\cite{DephrasureChannel}, which do not satisfy the conditions of Watanabe.

\section{Preliminaries}

Let $A$ and $B$ denote finite-dimensional quantum systems with Hilbert spaces $\mathcal{H}_A$ and $\mathcal{H}_B$, respectively. A \emph{quantum channel} from $A$ to $B$ consists of a completely positive, trace-preserving (CPTP) linear map 
\[
\mathcal{N}:\mathcal{L}(\mathcal{H}_A)\longrightarrow \mathcal{L}(\mathcal{H}_B)\, ,
\]
where $\mathcal{L}(\mathcal{H})$ denotes the algebra of linear operators on a Hilbert space $\mathcal{H}$. In order to study the one-shot quantum capacity of a quantum channel $\N$, we make use of a formula for one-shot quantum capacity which employs the notion of a \emph{complementary channel} to $\N$, which we now define. 

Given a quantum channel $\N$, there exists an environment $E$ and an isometry $V:\mathcal{H}_A\to \mathcal{H}_B\otimes \mathcal{H}_E$ such that
\[
\N(\rho) = \tr_E(V \rho V^\dagger)\qquad \,\,\, \forall \rho\in \mathcal{L}(\mathcal{H}_A)\, ,
\]
which is referred to as a \emph{Stinespring representation} of the channel $\mathcal{N}$. The complementary channel to $\N$ is the channel $\N^c:\mathcal{L}(\mathcal{H}_A)\to \mathcal{L}(\mathcal{H}_E)$ obtained by tracing out over $B$, i.e.,
\[
\N^c(\rho) = \tr_B(V \rho V^\dagger)\qquad \,\,\, \forall \rho\in \mathcal{L}(\mathcal{H}_A)\, .
\]
The \emph{coherent information} of $\N$ with respect to an initial state $\rho$ is the real number $I_c(\rho,\N)$ given by
\[
I_c(\rho,\mathcal{N})=S(\N(\rho))-S(\N^c(\rho))\, ,
\]
where $S(\rho) = -\tr(\rho \log \rho)$ denotes von~Neumann entropy. It is straightforward to show that the coherent information is independent of the choice of an environment system $E$ which determines the complimentary channel $\N^c$, and thus is completely determined by the channel $\N$ and the input state $\rho$. The one-shot quantum capacity of $\N$ is then given by the formula
\be \label{Q1FRMXS71}
\Q(\N)=\max_{\rho}I_c(\rho,\N)\, ,
\ee
and if $\rho$ is a state such that $\Q(\N)=I_c(\rho,\N)$, then $\rho$ will be referred to as an \emph{optimal state} with respect to the channel $\N$.

\section{Perturbative analysis}

As computing the one-shot quantum capacity $\Q$ is generally intractable~\cite{HolevoNPC}, in practice one often selects some special input states which are believed to be nearly optimal for coherent information, yielding an approximation to one-shot quantum capacity for the subsequent information processing task. For example, for a mixed unitary channel $\N(\rho) = \sum_i p_i U_i\rho U_i^\dagger $, it is often the case that one simply chooses the maximally mixed state as its input, which is known to be optimal in certain examples. In this section, we fix a quantum channel $\N$ and an input state $\rho$, and we consider a perturbation of $\rho$ with respect to a small parameter $\varepsilon$. We then analyze the behavior of the associated coherent information under such perturbations, which enable us to derive three criteria for whether the state $\rho$ yields suboptimal coherent information. In particular, given a finite collection of states $\{\rho_i\}_{i=1}^{n}$ such that $\rho_i$ is optimal for a channel $\N_i$, our criteria may be used to determine if the product state $\rho=\rho_1\otimes \cdots \otimes \rho_n$ is \emph{not} in fact optimal for the channel $\N=\N_1\otimes \cdots \otimes \N_n$. 

For this, consider the perturbative expansion
\begin{equation}\label{perstate}
    \rho(\varepsilon)=\rho+\sum_{i=1}^\infty \varepsilon^i A^{(i)} \,,
\end{equation}
where $A^{(i)}$ are traceless Hermitian operators such that $\rho(\varepsilon)$ converges to a valid state for sufficiently small $\varepsilon\in (-r,r)$, 
and let $f(\varepsilon)$ be the real-valued function given by
\begin{equation}\label{FNCXN71}
f(\varepsilon)=I_c(\rho(\varepsilon),\N)-I_c(\rho,\N)\, .
\end{equation}
The criteria formulated in this section are for determining when $f(\varepsilon)$ takes on a positive value, which in turn implies that $\rho$ is not optimal with respect to $\N$. To state our criteria we first set some notation in place. In the case that $\N(\rho)$ is not of full rank, we set $\Pi$ to be the projector onto $\text{Ker}(\N(\rho))$, and in the full rank case we simply set $\Pi$ to be the zero operator. We also use the same convention for $\N^c(\rho)$ (where $\N^c$ is the complementary channel to $\N$), so that $\Pi^c$ denotes the projector onto $\text{Ker}(\N^c(\rho))$ when $\N^c(\rho)$ is not of full rank and when $\N^c(\rho)$ is of full rank we simply take $\Pi^c$ to be the zero operator. Our first criterion is the following:
\begin{criterion}\label{cri1}
Suppose $\N(\rho)$ and $\N^c(\rho)$ are not both of full rank, and suppose there exists a perturbative expansion $\rho(\varepsilon)$ of $\rho$ as in \eqref{perstate} such that  
\begin{equation} \label{Cx1Xs}
    \tr \left(\Pi\, \N(A^{(1)})\right) > \emph{Tr} \left(\Pi^c\, \N^c(A^{(1)})\right) \,.
\end{equation}
Then there exists a positive value of $f(\varepsilon)$ as given by \eqref{FNCXN71}.
\end{criterion}

Criterion~\ref{cri1} may be viewed as a generalization of a result which first appeared in Ref.~\cite{DetectingPositive_NPJ}, where perturbative expansions of pure states are used to derive a sufficient condition for the positivity of coherent information for general quantum channels. Such a result was then applied to a qubit depolarizing channel $\N$, where it was shown that the complementary channel $\N^c$ has positive quantum capacity $Q(\N^c) > 0$, a result which was first established in Ref.~\cite{complementaryQuantum}. Criterion~\ref{cri1} however makes no purity assumption on $\rho$, and applies to arbitrary states. This more general setting also allows one to use Criteria~\ref{cri1} for detecting phenomena such as superadditivity of one-shot quantum capacity, as we show later on.

For the statement of the second and third criterion, we need to introduce some further notation: Let $\{\lambda_i\}$ denote the set of distinct eigenvalues of $\N(\rho)$, let $\{\mu_j\}$ denote the set of distinct eigenvalues of $\N^c(\rho)$, let $P_i$ denote the projector onto the $\lambda_i$-eigenspace of $\N(\rho)$ for all $i$, and let $Q_j$ denote the projector onto the $\mu_j$-eigenspace of $\N^c(\rho)$. Moreover, we let $W_\M$ and $W_\M^{(0)}$ denote operators which are determined by the spectral decompositions of $\M(\rho)$ for $\M = \N$ or $\N^c$, whose explicit expressions are given in~\eqref{W0}\eqref{W} of the Supplementary Material~\cite{SuppMaterial}.
\begin{criterion}\label{cri2}
    Suppose $\N(\rho)$ and $\N^c(\rho)$ are not both of full rank, and suppose there exists a perturbative expansion $\rho(\varepsilon)$ of $\rho$ as in \eqref{perstate} such that for all $P_i$ and $Q_j$
\begin{equation}\label{CX1Xs}
    \tr \left(P_i\, \N(A^{(1)})\right) = \emph{Tr} \left(Q_j\, \N^c(A^{(1)})\right) = 0 \,,
\end{equation}
and 
\begin{equation}\label{CX2Xs}
    \tr \left(\Pi \N(A^{(2)}) - W^{(0)}_{\N}\right) > \tr \left(\Pi^c \N^c(A^{(2)}) - W^{(0)}_{\N^c}\right)\, .
\end{equation}
Then there exists a positive value of $f(\varepsilon)$ as given by \eqref{FNCXN71}.
\end{criterion}

The third criterion covers the case when $\N(\rho)$ and $\N^c(\rho)$ are both of full rank. Since whether these two operators are full rank depends on the dimension of the corresponding space, we must emphasize that the dimensions of output $d^*_{\text{out}}(\N)$ and environment $d^*_{\text{env}}$ system of the channel $\N$ are both minimal, in the sense that $d^*_{\text{out}}(\N) = \mathrm{rank}[\N(\mathds{1})]$ and $d^*_{\text{env}} = \mathrm{rank}[\N^c(\mathds{1})]$ as shown in Lemma II.3 of~\cite{DetectingPositive_NPJ}, and hence the ranks of $\N(\rho)$ and $\N^c(\rho)$ satisfy $\mathrm{rank}[\N(\rho)] = \mathrm{rank}[\N(\mathds{1})]$ and $\mathrm{rank}[\N^c(\rho)] = \mathrm{rank}[\N^c(\mathds{1})]$.
\begin{criterion}\label{cri3}
Suppose the operators $\N(\rho)$ and $\N^c(\rho)$ are both of full rank, and suppose there exists a perturbative expansion $\rho(\varepsilon)$ of $\rho$ as in \eqref{perstate} such that
\begin{equation} \label{CX1X}
\tr\big(\N(A^{(1)})\log\N(\rho)\big) = \tr\big(\N^c(A^{(1)})\log\N^c(\rho)\big)
\end{equation}
and
\begin{equation} \label{CX2X}
\tr\big(W_{\N^c}\big) > \tr\big(W_{\N}\big)\, .
\end{equation}
Then there exists a positive value of $f(\varepsilon)$ as given by \eqref{FNCXN71}.
\end{criterion}

When $\N(\rho)$ and $\N^c(\rho)$ are both of full rank, Ref.~\cite{CohInf_JMP} computes the gradient and Hessian of ($n$-shot) coherent information of the strictly positive channel $\N$ at a positive definite state $\rho$. In contrast, our results propose explicit criteria for determining whether a quantum state is optimal for coherent information.

The proofs of these criteria are rather technical, and are thus given in the Supplementary Material~\cite{SuppMaterial}. The basic idea behind the criteria however is quite straightforward. Since the function $f(\varepsilon)$ satisfies $f(0)=0$ and its derivatives are continuous in a neighborhood of $\epsilon=0$, we are guaranteed a positive value of $f(\varepsilon)$ if either $f'(0)>0$, or $f'(0)=0$ and $f''(0)>0$. When $\N(\rho)$ and $\N^c(\rho)$ are not both of full rank, we show condition \eqref{Cx1Xs} implies $f'(0)>0$, thus giving us our result, while the condition~\eqref{CX1Xs} implies $f'(0) = 0$ and~\eqref{CX2Xs} implies $f''(0) > 0$. On the other hand, when $\N(\rho)$ and $\N^c(\rho)$ are both of full rank, we show condition \eqref{CX1X} implies $f'(0)=0$ and condition \eqref{CX2X} implies $f''(0)>0$, implying a positive value of $f(\varepsilon)$.

\begin{rem}
If we reverse the sense of the inequalities~\eqref{Cx1Xs}\eqref{CX2Xs}\eqref{CX2X} in these criteria, then we obtain criteria for $f(\varepsilon)$ to have a negative value.
\end{rem}

\section{Detecting superadditivity of one-shot quantum capacity}

In this section, we show how our criteria may be used to detect superadditivity of one-shot quantum capacity, and we provide two examples illustrating the utility of our results. 

\begin{prop}[\wz{\cite{Singular}}] \label{SPRADXTY87}
Let $\rho_1$ and $\rho_2$ be optimal states for quantum channels $\N_1$ and $\N_2$, let $\rho(\varepsilon)$ be a perturbative expansion of $\rho=\rho_1\otimes \rho_2$ as in \eqref{perstate}, and let $f(\varepsilon)$ be the function given by \eqref{FNCXN71} with $\N=\N_1\otimes \N_2$. Then a positive value of $f(\varepsilon)$ implies that the one-shot quantum capacity of $\N_1$ and $\N_2$ is superadditive, i.e., $\Q(\N_1\otimes \N_2)>\Q(\N_1)+\Q(\N_2)$.  \end{prop}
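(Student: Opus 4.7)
The plan is to reduce the claim to the additivity of coherent information on product states followed by an immediate application of the definition of one-shot quantum capacity as a maximum. The whole proof should be quite short because the hypotheses do most of the work; the only genuine ingredient is that coherent information factorizes on pure tensors of states and channels.

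First, I would establish the identity $I_c(\rho_1\otimes \rho_2,\N_1\otimes \N_2)=I_c(\rho_1,\N_1)+I_c(\rho_2,\N_2)$. For this, observe that if $V_i:\mathcal{H}_{A_i}\to \mathcal{H}_{B_i}\otimes \mathcal{H}_{E_i}$ is a Stinespring isometry for $\N_i$, then $V_1\otimes V_2$ (suitably reordered between $B$-systems and $E$-systems) is a Stinespring isometry for $\N_1\otimes \N_2$ with environment $E_1\otimes E_2$. Consequently the complementary channel of the tensor product is the tensor product of the complementary channels, i.e., $(\N_1\otimes \N_2)^c=\N_1^c\otimes \N_2^c$. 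Combined with additivity of von~Neumann entropy on product states, namely $S(\sigma_1\otimes \sigma_2)=S(\sigma_1)+S(\sigma_2)$, this gives the claimed factorization.

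Next, since $\rho_i$ is optimal for $\N_i$, one has $I_c(\rho_i,\N_i)=Q^{(1)}(\N_i)$, so the factorization rewrites as
\begin{equation*}
I_c(\rho,\N_1\otimes \N_2)=Q^{(1)}(\N_1)+Q^{(1)}(\N_2)\, .
\end{equation*}
By hypothesis there exists some $\varepsilon_0$ with $f(\varepsilon_0)>0$, which unwinds the definition of $f$ to give $I_c(\rho(\varepsilon_0),\N_1\otimes \N_2)>I_c(\rho,\N_1\otimes \N_2)=Q^{(1)}(\N_1)+Q^{(1)}(\N_2)$. Finally, since $\rho(\varepsilon_0)$ is a valid input state for $\N_1\otimes \N_2$, the maximum characterization \eqref{Q1FRMXS71} of one-shot quantum capacity forces $Q^{(1)}(\N_1\otimes \N_2)\geq I_c(\rho(\varepsilon_0),\N_1\otimes \N_2)>Q^{(1)}(\N_1)+Q^{(1)}(\N_2)$, which is the desired strict superadditivity.

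There is no real obstacle: the only technical point to verify carefully is the identification $(\N_1\otimes\N_2)^c=\N_1^c\otimes \N_2^c$, which requires a small bookkeeping step on Stinespring dilations but is otherwise routine. Everything else is a direct consequence of the optimality of $\rho_1$ and $\rho_2$ and the definition of $Q^{(1)}$ as a maximum over input states.
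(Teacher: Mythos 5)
Your proposal is correct and follows essentially the same route as the paper's proof: use additivity of coherent information on the product state $\rho_1\otimes\rho_2$ together with optimality of $\rho_1,\rho_2$ to identify $I_c(\rho,\N_1\otimes\N_2)=\Q(\N_1)+\Q(\N_2)$, then combine $f(\varepsilon_0)>0$ with the maximum characterization \eqref{Q1FRMXS71}. The only difference is that you also sketch the verification of $(\N_1\otimes\N_2)^c=\N_1^c\otimes\N_2^c$ underlying the additivity step, which the paper simply invokes.
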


\begin{proof}
Since $\rho_i$ is optimal for $\N_i$, the additivity of coherent information yields
\begin{align*}
I_c(\rho_1\otimes \rho_2,\N_1\otimes \N_2)& =I_c(\rho_1,\N_2)+I_c(\rho_2,\N_2) \\
&=\Q(\N_1)+\Q(\N_2)\, .
\end{align*}
Therefore, if $f(\varepsilon)>0$, we then have
\begin{align*}
\Q(\N_1)+\Q(\N_2)&=I_c(\rho_1\otimes \rho_2,\N_1\otimes \N_2) \\
&<I_c(\rho(\varepsilon),\N_1\otimes \N_2) \\
&\leq \Q(\N_1\otimes \N_2)\, ,
\end{align*}
as desired.
\end{proof}

\begin{rem}
Proposition~\ref{SPRADXTY87} naturally extends to the multi-channel case: Let $\{\M_i\}_{i=1}^n$ be a finite collection of channels, and suppose that $\{\sigma_i\}_{i=1}^n$ is a collection of states such that $\sigma_i$ is optimal for $\M_i$ for all $i$, let $\rho(\varepsilon)$ be a perturbative expansion of $\rho =\sigma_1\otimes\cdots\otimes\sigma_n$ as in \eqref{perstate}, and let $f(\varepsilon)$ be the function given by \eqref{FNCXN71} with $\N = \M_1\otimes\cdots\otimes\M_n$. Then a positive value of $f(\varepsilon)$ implies that the one-shot quantum capacity of $\{\M_i\}_{i=1}^{n}$ is superadditive, i.e.,
\[
\Q\Big(\bigotimes_{i=1}^{n}\M_i\Big)>\sum_{i=1}^{n}\Q(\M_i)\, .
\]
\end{rem}

In Refs.~\cite{platypusPRL,Singular}, it was shown that the qutrit platypus channel has superadditive one-shot quantum capacity when combined with depolarizing channels and amplitude damping channels. In the next example, we show how Criterion~\ref{cri2} may be utilized to recover the result of the latter case, namely, when a qutrit platypus channel is combined with an amplitude damping channel.   

\bx
The qutrit platypus channel $\mathcal{N}_s$~\cite{Lovasz,Singular} is the channel associated with the Stinespring representation determined by the isometry $F_s : \mathcal{H}_A \rightarrow \mathcal{H}_B\otimes \mathcal{H}_E$ given by
\begin{equation*}
    \begin{aligned}
        F_s |0\rangle & = \sqrt{s}|0\rangle\otimes |0\rangle + \sqrt{1-s}|1\rangle\otimes |1\rangle\, , \\
        F_s |1\rangle & = |2\rangle\otimes|0\rangle\, ,\\
        F_s |2\rangle & = |2\rangle\otimes|1\rangle\, ,
    \end{aligned}
\end{equation*}
with $0<s\leq 1/2$. The qutrit platypus channel has been studied extensively in Refs.~\cite{platypusTIT,platypusPRL}, where it was shown that a state of the form $\rho(p) = (1-p)|0\rangle\langle0| + p|2\rangle\langle 2|$ is optimal with respect to $\N_{s}$, so that
\begin{equation*}
    \Q(\mathcal{N}_s) = \max_{p\in [0,1]} I_c\big(\rho(p),\mathcal{N}_s\big)\, .
\end{equation*}
The qubit amplitude damping channel $\mathcal{A}_\gamma$ with damping probability $\gamma \in [0,1]$ admits the Stinespring representation corresponding to the unitary $V_\gamma = A_0\otimes |0\rangle + A_1\otimes |1\rangle$, with $A_0 = |0\rangle\langle 0| + \sqrt{1-\gamma} |1\rangle\langle 1|$ and $A_1 = \sqrt{\gamma} |0\rangle\langle 1|$. It is then straightforward to show that the associated complementary channel is also an amplitude damping channel, with damping probability $1-\gamma$. The quantum capacity of an amplitude damping channel coincides with its one-shot quantum capacity, i.e., $Q(\mathcal{A}_\gamma) = \Q(\mathcal{A}_\gamma)$~\cite{LSD2} for all $\gamma\in [0,1]$, and its quantum capacity is vanishing for $\gamma \geq 1/2$. Moreover, it is known that a state of the form $\rho(u) = (1-u)|0\rangle\langle0| + u|1\rangle\langle 1|$ is optimal for all $\gamma$~\cite{ADC}, so that
\begin{equation*}
    \Q(\mathcal{A}_\gamma) = \max_{u\in [0,1]} I_c\big(\rho(u), \mathcal{A}_\gamma\big)\,.
\end{equation*}

\begin{figure}
    \centering
    \includegraphics[width=1\linewidth]{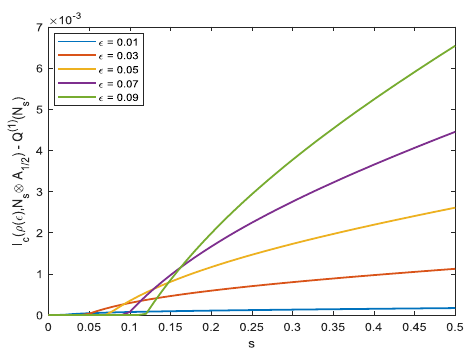}
    \caption{Super-additivity of one-shot quantum capacity for the qutrit platypus channel $\N_s$ and the qubit amplitude damping channel $\mathcal{A}_{1/2}$. Here $\mathcal{A}_{1/2}$ is anti-degradable with zero quantum capacity, and the perturbative expansion $\rho(\varepsilon)$ is given in \eqref{pertstateNs} with $a^2 = 1-\varepsilon^2$.}
    \label{fig1}
\end{figure}

So now let $\rho_1 = (1-w)|0\rangle\langle0| + w|2\rangle\langle 2|$ and $\rho_2 = |0\rangle\langle 0|$ be optimal states with respect to $\N_s$ and $\mathcal{A}_\gamma$ with $\gamma \geq 1/2$, respectively, and let $\rho=\rho_1\otimes \rho_2$. We then consider the perturbative expansion $\rho(\varepsilon)$ given by 
\begin{equation}\label{pertstateNs}
    \rho(\varepsilon) = \rho_1\otimes\rho_2 + \varepsilon A^{(1)} + \varepsilon^2 A^{(2)}\, ,
\end{equation}
where $A^{(1)} = wa (|11\rangle\langle 20| + |20\rangle\langle 11|)$ and $
A^{(2)} = w (|11\rangle\langle 11|-|20\rangle\langle20|)$
with $a^2 \leq 1-\varepsilon^2$ (which ensures $\rho(\varepsilon)$ is a state). 

As the output states $\N_1(\rho_1)\otimes \N_2(\rho_2)$ and $\N_1^c(\rho_1)\otimes \N_2^c(\rho_2)$ are both diagonal which implies that the projectors $P_i$ and $Q_j$ in condition~\eqref{CX1Xs} of Criterion~\ref{cri2} are corresponding to the computational basis, and the diagonal entry of operators $\N_1\otimes \N_2(A^{(1)})$ and $\N_1\otimes\N_2(A^{(1)})$ are all zero, one can know that the operator $A^{(1)}$ given here satisfies condition~\eqref{CX1Xs} of Criterion~\ref{cri2}, we now show the condition~\eqref{CX2Xs} also holds: Let $\N = \mathcal{N}_s\otimes \mathcal{A}_{\gamma}$, so that its complement is $\N^c= \mathcal{N}_s^c\otimes \mathcal{A}_{\gamma}^c$. We then have
\begin{equation*}
    \begin{aligned}
        \tr \left(\Pi \N (A^{(2)}) - W^{(0)}_{\N}\right) & = w(1-\gamma) \,, \,\, \text{and} \\
        \tr \left(\Pi^c \N^c(A^{(2)}) - W^{(0)}_{\N^c}\right) & = w\gamma - \frac{w^2a^2\gamma}{(1-s)(1-w)+w}\,.
    \end{aligned}
\end{equation*}
Since $a^2$ can be chosen to close to $1$ for $\varepsilon$ close to $0$, it follows that condition~\eqref{CX2Xs} of Criterion~\ref{cri2} holds for all $\gamma \in [\frac{1}{2}, \frac{1-s+sw}{2-2s-w+2sw})$, which is the same range of $\gamma$ provided in~\cite{Singular,platypusPRL}, together with Proposition~\ref{SPRADXTY87} implies that the one-shot quantum capacity of $\N_s$ and $\mathcal{A}_\gamma$ is superadditive. We note that the perturbative expansion $\rho(\varepsilon)$ given by \eqref{pertstateNs} is different from the perturbative state $\sigma(\varepsilon)$ utilized in Refs.~\cite{Singular,platypusPRL} to establish superadditivity of $\Q$, which is given by 
\[
\sigma(\varepsilon) = r_1|00\rangle\langle 00| +r_2 |01\rangle\langle 01| + (1-r_1-r_2)|\psi\rangle\langle\psi|\, ,
\]
where $r_1,r_2$ are parameters to be optimized and $|\psi\rangle = \sqrt{1-\varepsilon}|20\rangle + \sqrt{\varepsilon}|11\rangle$. As shown in Fig.~\ref{fig1}, for $\varepsilon = 0.01$, $\Q$ is superadditive for $s>0.01$, while greater $\varepsilon$ gives rise to higher superadditive  effects for $s$ sufficiently large. $\hfill\triangle$
\ex


As for the applicability of Criterion~\ref{cri3}, we have yet to find any examples of pairs $(\rho,\mathcal{N})$ for which the conditions \eqref{CX1X} and \eqref{CX2X} of Criterion~\ref{cri3} are both satisfied. It could be the case that such conditions are satisfied only on a set of measure-zero, in which case the utility of Criterion~\ref{cri3} would be quite limited. As such, it would be interesting to determine the locus of pairs $(\rho,\mathcal{N})$ satisfying the conditions \eqref{CX1X} and \eqref{CX2X} for fixed input and output systems of the channel $\mathcal{N}$.

\section{One-shot quantum capacity vs. one-shot private capacity}
 
The one-shot private capacity is the highest rate at which a quantum channel can faithfully transmit classical information without leaking information to an external system, and plays a fundamental role in quantum cryptography~\cite{BB84,E91,QuanCryp}. For its mathematical definition, consider an ensemble $\{p_i, \rho_i\}$ of input states for a quantum channel $\N$, and let the state $\rho = \sum_i p_i\rho_i$ be the associated mixture for this ensemble. Then the private information associated with the ensemble $\{p_i, \rho_i\}$ and channel $\N$ is the real number $I_p(\{p_i, \rho_i\},\N)$ given by 
\begin{equation}\label{privateinformation}
    I_p\left(\{p_i, \rho_i\},\N\right) = I_c(\rho,\N) - \sum_i p_i I_c(\rho_i,\N)\,, 
\end{equation}
and the one-shot private capacity $\mathcal{P}^{(1)}(\N)$ is the maximum of $I_p(\{p_i,\rho_i\},\N)$ over all ensembles, i.e.,
\[
\mathcal{P}^{(1)}(\N)=\max_{\{p_i,\rho_i\}}I_p(\{p_i,\rho_i\},\N)\, .
\]

Given a state $\rho$, let $E_{\rho}$ denote the set of all ensemble decompositions of $\rho$. Since the coherent information with respect to a pure state is necessarily zero, 
\[
\max_{\{p_i,\rho_i\}\in E_{\rho}} I_p\Big(\{p_i, \rho_i\},\N\Big)\geq I_c(\rho,\N)\, ,
\]
since one can always take the pure state decomposition $\{p_i, |\psi_i\rangle\langle \psi_i|\}$ of $\rho$ so that $\rho = \sum_i p_i |\psi_i\rangle\langle\psi_i|$, in which case we have $I_p\big(\{p_i, |\psi_i\rangle\langle \psi_i|\},\N\big) = I_c(\rho,\N)$. 
We then have
\begin{equation*}
\begin{aligned}
    P^{(1)}(\N) & = \max_{\rho} \max_{\{p_i,\rho_i\}\in E_{\rho}} I_p\Big(\{p_i, \rho_i\},\N\Big) \\
                & \geq \max_{\rho} I_c(\rho,\N) = \Q(\N) \, ,
\end{aligned}
\end{equation*}
so that one-shot quantum capacity never exceeds one-shot private capacity. Similar to quantum capacity, the private capacity of a quantum channel $\N$ is then given by the regularized formula
\begin{equation*}
    P(\N) = \lim_{n\rightarrow\infty} \frac{P^{(1)}(\N^{\otimes n})}{n}\,,
\end{equation*}
and is thus rarely computable due not only to superadditive effects, but also due to intractable optimization procedures needed for the calculation of $\mathcal{P}^{(1)}$. However, it is still possible to detect a positive gap between $P^{(1)}$ and $\Q$ without explicitly computing their exact values.
While Watanabe~\cite{Watanabe} proved that $\Q(\B) = P^{(1)}(\B)$ if the one-shot quantum capacity of the complementary channel $\B^c$ vanishes~\cite{Hirche2022,IEEEPartialOrder}. There exist quantum channels displaying the positive gap between one-shot quantum and private capacity, such as Horodecki channels~\cite{Horodeckichannel2,Horodeckichannel3,HorodeckiChannel1,Horodeckichannel4}, half-rocket channels~\cite{Leung2014}, platypus channels~\cite{Lovasz,SDPBoundC,platypusTIT,wu2025} and dephrasure channels~\cite{DephrasureChannel}.
In particular, we now show how our criteria implies a sufficient condition for $P^{(1)}$ to be strictly larger than $\Q$ which is more general than the result of Watanabe~\cite{Watanabe}:

\begin{prop}\label{prop2}
    Suppose $\sigma$ is an optimal mixed state for a quantum channel $\N$, and suppose $\rho(\varepsilon)$ is a perturbative expansion of a pure state $\rho$ such that $\sigma - r\cdot\rho(\varepsilon)$ is positive semi-definite for some $r>0$. Then a negative value of $f(\varepsilon)$ as given by \eqref{FNCXN71} implies $P^{(1)}(\N) > \Q(\N)$.
\end{prop}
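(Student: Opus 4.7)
The strategy is to construct an explicit ensemble decomposition of $\sigma$ whose private information strictly exceeds $I_c(\sigma,\N)=\Q(\N)$. The whole argument hinges on the fact, already recalled in the paper, that coherent information vanishes on any pure input. In particular $I_c(\rho,\N)=0$ since $\rho$ is pure, so the hypothesis $f(\varepsilon)<0$ rewrites cleanly as $I_c(\rho(\varepsilon),\N)<0$; moreover $\Q(\N)\ge 0$, as this value is attained by any pure input.

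First I would extract an admissible convex decomposition of $\sigma$ from the positivity hypothesis. Taking the trace of $\sigma-r\,\rho(\varepsilon)\ge 0$ gives $1-r\ge 0$, so together with $r>0$ one has $0<r\le 1$. The boundary case $r=1$ would force $\sigma=\rho(\varepsilon)$ and hence $\Q(\N)=I_c(\sigma,\N)=I_c(\rho(\varepsilon),\N)<0$, contradicting $\Q(\N)\ge 0$. Thus $0<r<1$, and
\[
\tau(\varepsilon):=\frac{\sigma-r\,\rho(\varepsilon)}{1-r}
\]
is a legitimate state satisfying $\sigma=r\,\rho(\varepsilon)+(1-r)\,\tau(\varepsilon)$.

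Next I would refine the second summand into pure states via its spectral decomposition $\tau(\varepsilon)=\sum_j q_j|\phi_j\rangle\langle\phi_j|$, and form the ensemble
\[
\mathcal{E}=\big\{(r,\rho(\varepsilon))\big\}\cup\big\{\big((1-r)q_j,|\phi_j\rangle\langle\phi_j|\big)\big\}_j,
\]
whose mixture is $\sigma$. Substituting $\mathcal{E}$ into \eqref{privateinformation} and using that each $|\phi_j\rangle\langle\phi_j|$ contributes zero coherent information, the sum collapses to
\[
I_p(\mathcal{E},\N)=I_c(\sigma,\N)-r\,I_c(\rho(\varepsilon),\N)=\Q(\N)-r\,I_c(\rho(\varepsilon),\N),
\]
which strictly exceeds $\Q(\N)$ since $r>0$ and $I_c(\rho(\varepsilon),\N)<0$. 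Because $P^{(1)}(\N)\ge I_p(\mathcal{E},\N)$, the proof is complete.

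The argument has no genuinely hard step; it merely packages the pure-input property of coherent information together with the definition of private information and the convex-decomposition hypothesis into an ensemble-construction argument. The only subtlety worth flagging is ruling out the edge case $r=1$, which the short contradiction above disposes of.
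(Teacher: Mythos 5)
Your proof is correct and follows essentially the same route as the paper's: decompose $\sigma$ into $r\,\rho(\varepsilon)$ plus a pure-state (spectral) ensemble of the remainder, use that coherent information vanishes on pure states so the ensemble's average coherent information reduces to $r\,I_c(\rho(\varepsilon),\N)=r\,f(\varepsilon)<0$, and conclude $P^{(1)}(\N)\geq I_p(\mathcal{E},\N)=\Q(\N)-r\,f(\varepsilon)>\Q(\N)$. Your explicit normalization of the remainder and the exclusion of the edge case $r=1$ are small tidy additions not present in (but not needed by) the paper's argument.
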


\begin{proof}
Since by Eq.~\eqref{privateinformation} we have
\begin{equation*}
\begin{aligned}
    P^{(1)}(\N) & \geq \max_{\{p_i,\sigma_i\}\in E_{\sigma}} I_p\Big(\{p_i, \sigma_i\},\N\Big) \\ 
    & =  \Q(\N) - \min_{\{p_i,\sigma_i\}\in E_{\sigma}} \sum_i p_i I_c(\sigma_i,\N)\,,
\end{aligned}
\end{equation*}
it follows that if there exists an ensemble $\{p_i,\sigma_i\}\in E_{\sigma}$ such that $\sum_i p_i I_c(\sigma_i,\N) < 0$, then $P^{(1)} > \Q$.

Now let $\rho$ be a pure state, and let $r>0$ be such that $\sigma-r\cdot\rho(\varepsilon)$ is positive semi-definite for sufficiently small $\varepsilon$ (if the kernel of $\sigma$ is contained in the kernel of $\rho(\varepsilon)$ for some $\varepsilon$, then such an $r$ necessarily exists). Now let $\{q_i,|\psi_i\rangle\langle\psi_i|\}$ be the spectral decomposition of $\sigma-r\cdot\rho(\varepsilon)$ with respect to 1-dimensional projectors (pure states), so that $\{(r,\rho(\varepsilon)),(q_i,|\psi_i\rangle\langle\psi_i|)\}$ is an ensemble decomposition of $\sigma$. We then have 
\begin{equation*}
    r I_c\Big(\rho(\varepsilon),\N\Big) + \sum_i q_i I_c(|\psi_i\rangle\langle\psi_i|,\N) = r I_c\Big(\rho(\varepsilon),\N\Big)\, ,
\end{equation*}
and since $\rho(0) = \rho$ is a pure state, the fact that coherent information is vanishing for pure states implies that the function $f(\varepsilon)$ as defined by Eq.~\eqref{FNCXN71} is then given by $f(\varepsilon) = I_c\big(\rho(\varepsilon),\N\big)$. It then follows that a negative value of $f(\varepsilon)$ implies $P^{(1)}(\N)>\Q(\N)$, as desired.
\end{proof}

\begin{rem}\label{rem5}
In Refs.~\cite{Watanabe,IEEEPartialOrder}, it was shown that if $P(\N^c) = 0$, then 
\begin{equation}\label{equalityQP}
    \Q(\N) = P^{(1)}(\N) = Q(\N) = P(\N)\,.
\end{equation}
As such, a positive gap $P^{(1)}(\N)-\Q(\N)>0$ implies $P(\N^c) >0$. It then follows that the hypotheses of Proposition~\ref{prop2} also provide a sufficient condition for positive private capacity. 
\end{rem}

We now provide a simple example illustrating how Proposition~\ref{prop2} combined with Criterion~\ref{cri1} can be used to detect a positive gap between $P^{(1)}$ and $\Q$.

\bx
Let $\N=\mathcal{D}_p$, where $\mathcal{D}_p$ is the qubit depolarizing channel $\mathcal{D}_p(\rho) = (1-p)\rho + p\, \mathds{1}/2$. It is known that the maximally mixed state $\sigma = \mathds{1}/2$ is optimal for $\N$ for all $p \in [0,0.2524]$. Now consider the perturbative state $\rho(\varepsilon)$ given by 
\[
\rho(\varepsilon) = |0\rangle\langle 0| - \varepsilon \sigma_Z\, ,
\]
with $\varepsilon \in [0,1]$ and $A^{(1)} = -\sigma_Z$. In such a case we have $\sigma - r\cdot\rho(\varepsilon) \geq 0$ for all $r \in (0,1/2]$. We then have
\begin{equation*}
    \begin{aligned}
        \tr\Big(\Pi \N(A^{(1)})\Big) & = 0 \,, \\
        \tr\Big(\Pi^c \N^c(A^{(1)})\Big) & = \frac{p(3-2p)}{2-p} \,,
    \end{aligned}
\end{equation*}
where $\Pi$ and $\Pi^c$ denote the orthogonal projectors onto the kernel of $\N(|0\rangle\langle 0|)$ and $\N^c(|0\rangle\langle 0|)$ respectively. Utilizing Criterion~\ref{cri1}, we find that function $f(\varepsilon)$ has a negative value for all $p \in (0,0.2524]$, and hence $P^{(1)}(\N) > \Q(\N)$ by Proposition~\ref{prop2}. We note that by Remark~\ref{rem5} it then follows that $P(\N^c)>0$. $\hfill\triangle$.
\ex

We note that if $\sigma$ is optimal for a channel $\N$ and there exists a state $\rho$ such that $I_c(\rho,\N^c) > 0$ and $\sigma - r\cdot\rho$ is positive semi-definite for some $r>0$, then the same argument as the proof of Proposition~\ref{prop2} yields $\Q(\N) < P^{(1)}(\N)$, which generalizes the condition for positive gap between $P^{(1)}$ and $\Q$ provided in~\cite{Watanabe,Hirche2022}. There are many examples satisfying such conditions, such as Pauli channels~\cite{complementaryQuantum}, and dephrasure channels $\N_{p,q}$ which have positive one-shot quantum capacity for all $p,q \in \mathcal{R}_1$, where 
\[
\mathcal{R}_1=\left\{(p,q)\,\, \Big{|}\,\, p\in [0,1/2], \,\, 0\leq q < \frac{(1-2p)^2}{1+(1-2p)^2}\right\}\, .
\]
In particular, when $(p,q)\in \mathcal{R}_1$, the complementary channel $\N_{p,q}^c$ has positive one-shot quantum capacity, and the optimal state for $\N_{p,q}$ is of full rank. Therefore, for any state $\rho$ such that $I_c(\rho,\N^c_{p,q})>0$, we can always find a small parameter $r$ such that $\sigma-r\cdot \rho>0$. Then it follows that $P^{(1)}(\N_{p,q})-\Q(\N_{p,q})>0$ for all $(p,q)\in \mathcal{R}_1$. This extends a result in Ref.~\cite{DephrasureChannel}, where it was shown such a positive gap exists in the range $p\in[0.08,0.125]$ and $q = 3p$. 

\section{discussion}

In this work, we performed a perturbative analysis of the behavior of coherent information with respect to small perturbation of the input state, and were able to establish 3 general criteria for determining when a state is not a local maximum for coherent information. We then employed such criteria to derive sufficient conditions for super-additivity of one-shot quantum capacity, and also for detecting a positive gap between one-shot private capacity and one-shot quantum capacity. Theoretically speaking, our criteria can also be utilized for detecting superadditivity of one-shot private capacity. In particular, from Eq.~\eqref{privateinformation} it follows that the private information of a quantum channel $\N$ associated with a general ensemble can be written as the difference between the coherent information $I_c(\sum p_i\rho_i, \N)$ and the convex sum $\sum p_i I_c(\rho_i,\N)$. Now given optimal ensembles $\{p_i,\rho_i\}$ and $\{q_j,\omega_j\}$ realizing the one-shot private capacity of quantum channels $\N$ and $\M$ respectively, one may consider the perturbative ensemble $\{p_i q_j, \sigma_{ij}(\varepsilon)\}$ with $\sigma_{ij}(0) = \rho_i\otimes\omega_j$. Now let $f(\varepsilon)$ be the function given by
\[
f(\varepsilon)=I_p(\varepsilon) - I_p(0)\, ,
\]
where   
\begin{equation*}
    I_p(\varepsilon)= I_p\Big(\{p_iq_j, \sigma_{ij}(\varepsilon)\}, \N\otimes\M\Big)\, .
\end{equation*}
We then have
\begin{equation*}
    f'(\varepsilon) = I_c'(\sigma(\varepsilon),\N\otimes\M) - \sum p_iq_j I_c'(\sigma_{ij}(\varepsilon), \N\otimes\M)\,,
\end{equation*}
where $\sigma(\varepsilon) = \sum p_iq_j \sigma_{ij}(\varepsilon)$. Therefore, for various ensembles and channels, we can use our perturbative analysis of coherent information to detect super-additivity of the one-shot private capacity. However, we have yet to find any examples for which our criteria can be applied. On the other hand, for degradable channels, their private capacities are additive, and equal to the one-shot quantum capacity, hence our results also give necessary conditions for degradability of generic channels.

Similar to one-shot quantum and private capacities, the Holevo capacity $C^{(1)}$ bounds how much classical information can be transmitted by a single use of a quantum channel, which is defined by taking a maximum of Holevo information over all ensembles. Given an ensemble $\{p_i,\rho_i\}$ of input states for a quantum channel $\N$, the Holevo information is the real number $\chi(\{p_i,\rho_i\},\N)$ given by
\begin{equation}\label{Holevoinformation}
    \chi\Big(\{p_i,\rho_i\},\N\Big) = S\big(\N(\rho)\big) - \sum_i p_i S\big(\N(\rho_i)\big)\,,
\end{equation}
where $\rho = \sum_i p_i\rho_i$ is the associated mixture of the ensemble. It then follows from Eq.~\eqref{privateinformation}, that one can rewrite the private information as
\begin{equation}\label{IpChi}
    I_p\Big(\{p_i,\rho_i\},\N\Big) = \chi\Big(\{p_i,\rho_i\},\N\Big) - \chi\Big(\{p_i,\rho_i\},\N^c\Big)\,.
\end{equation}
Due to the concavity of entropy, the private information is naturally less than Holevo information, and moreover, since the Holevo capacity is the maximum of Holevo information over all ensembles, we have
\begin{equation*}
    P^{(1)}(\N) \leq C^{(1)}(\N)\,.
\end{equation*}
In particular, if $\{p_i,\rho_i\}$ achieves the one-shot private capacity of $\N$, Eq.~\eqref{IpChi} yields
\begin{equation*}
    \begin{aligned}
        C^{(1)}(\N) & \geq \chi\Big(\{p_i,\rho_i\},\N\Big) \\ 
        & = P^{(1)}(\N) + \chi\Big(\{p_i,\rho_i\},\N^c\Big)\,,
    \end{aligned}
\end{equation*}
thus the gap between $P^{(1)}$ and $C^{(1)}$ is at least $\chi(\{p_i,\rho_i\},\N^c) \geq 0$. According to the equality condition for concavity of entropy, it follows that $\chi(\{p_i,\rho_i\},\N^c) = 0$ if and only if $\N^c(\rho_i) = \N^c(\rho)$ for all $i$. Consequently, the perturbative method may be of limited utility in identifying the gap between the one-shot private capacity and the Holevo capacity, analogous to its failure in detecting the nonadditivity of the minimum output entropy~\cite{MOELA}. 


\section{Acknowledgments} 
We thank Bikun Li for helpful comments. ZM is supported by the Fundamental Research Funds for the Central Universities, the National Natural Science Foundation of China (no. 12371132). JF is supported by the Hainan University startup fund for the project ``Spacetime from quantum information".
  
\bibliography{reference}

\clearpage
\newpage

\title{Methods}
\author{testing}

\maketitle
\onecolumngrid
\vspace{1cm}

\begin{center}\large \textbf{Criteria for  super-additivity of coherent information} \\
\textbf{--- Supplementary Material ---}\\
\end{center}

\appendix

\section{Perturbation Theory and The Derivatives of Entropy}

In this section, we review some results in perturbation theory in order to give the exact expression of eigenvalues for the perturbative expansion $\rho(\varepsilon)$ considered in this work.

Suppose that $\rho$ is a quantum state, and consider the perturbative expansion $\rho(\varepsilon)$ of $\rho$ given by
\begin{equation}\label{PertEx}
    \rho(\varepsilon) = \rho + \sum_{i = 1}^\infty \varepsilon^i A^{(i)} \,,
\end{equation}
where $\rho(0) = \rho$ and $A^{(i)}$ are traceless Hermitian operators such that $\rho(\varepsilon)$ is a valid state for small enough $\varepsilon$. Let $\{\lambda_i\}$ be the set of disctict eigenvalues of $\rho$, so that $\lambda_i\neq \lambda_j$ for $i\neq j$, and let $n_i$ denote the multiplicity of $\lambda_i$. Let $P_i$ denote the orthogonal proection onto the $\lambda_i$-eigenspace of $\rho$, so that $\rho = \sum_i \lambda_i P_i$ and $\sum_i P_i = \mathds{1}$. 

We now consider one eigenvalue $\lambda \in \{\lambda_i\}$ with multiplicity $n$, as well as the associated projection $P$ (the index is ignored for simplicity). We then define the operator $S$ given by
\[
S = \sum_{\lambda'\neq \lambda} (\lambda' - \lambda)^{-1} P'\, ,
\]
where $\lambda'$ are other eigenvalues of $\rho$ differing from $\lambda$ and $P'$ is the eigenprojection corresponding to $\lambda'$-eigenspace of $\rho$. It's clear that $S$ satisfies $SP = PS = 0$.

The eigenvalue perturbation theory~\cite{Kato,Baumgartel1984} shows that the eigenvalue $\lambda$ of $\rho$ will in general split into several perturbative eigenvalues $\lambda_j(\varepsilon)$ of $\rho(\varepsilon)$ for small $\varepsilon \neq 0$ as 
\begin{equation*}
     \lambda_{j}(\varepsilon) = \lambda + \lambda_{j1}\varepsilon + \lambda_{j2}\varepsilon^2 +\cdots \qquad j = 1,\cdots,n\, .
\end{equation*}
The exact expression of $\lambda_j(\varepsilon)$ is in general difficult to obtain. Here we recall some properties about the sum of these perturbative eigenvalues.

First, Ref.~\cite{Kato} gives the eigen-projection corresponding to these perturbative eigenvalues as $P(\varepsilon)$.
On the other hand, the weighted mean of these perturbative eigenvalues is given by
\begin{equation*}
    \hat{\lambda}(\varepsilon) = \frac{1}{n} \tr\Big(\rho(\varepsilon)P(\varepsilon)\Big) = \lambda + \frac{1}{n} \tr\Big((\rho(\varepsilon) - \lambda) P(\varepsilon)\Big) \,.
\end{equation*}
It is worth noting that if there is no splitting of $\lambda$, i.e., if these perturbative eigenvalues are all same and equal to an identical $\lambda(\varepsilon)$, then $\hat{\lambda}(\varepsilon) = \lambda(\varepsilon)$. In particular, this equality always holds if $n=1$.

Second, Ref.~\cite{Kato} shows $(\rho(\varepsilon) - \lambda)P(\varepsilon) = \sum_{i=1}^\infty \varepsilon^i T^{(i)}$, and the first two term $T^{(1)}$ and $T^{(2)}$ are 
\begin{equation}\label{OperatorT}
    \begin{aligned}
        T^{(1)} = PA^{(1)}P \,, \qquad \text{and} \qquad
        T^{(2)} = P A^{(2)} P-P A^{(1)} P A^{(1)} S - P A^{(1)}S A^{(1)} P-S A^{(1)} P A^{(1)} P \,.
    \end{aligned}
\end{equation}
The weighted mean $\hat{\lambda}(\varepsilon)$ can then be written as
\begin{equation*}
    \hat{\lambda}(\varepsilon) = \lambda + \sum_{i=1}^\infty \varepsilon^i \lambda^{(i)}\,,\quad \lambda^{(i)} = \frac{1}{n}\tr\big(T^{(i)}\big)\,, \, i \geq 1 \, .
\end{equation*}
In particular, since $\hat{\lambda}(\varepsilon) = \frac{1}{n} \sum_j \lambda_j(\varepsilon)$, we have 
\begin{equation}\label{sumlam}
    \sum_j \lambda_{j1} = \tr(T^{(1)}) = \tr(PA^{(1)})\,,\quad \sum_j \lambda_{j2} = \tr(T^{(2)}) = \tr(PA^{(2)}) - \tr(PA^{(1)}SA^{(1)}P)\,.
\end{equation}
Our later discussion is highly dependent on the equality~\eqref{sumlam} above.

\vspace*{2em}

Now, we apply above discussion about the perturbative eigenvalues to analyze the derivatives of the entropy function $S(\varepsilon)=S(\rho(\varepsilon))$ in a small neighborhood of $\varepsilon=0$. So consider the function $g(\varepsilon) = \lambda(\varepsilon) \log \lambda(\varepsilon)$ with positive $\lambda(\varepsilon) = \lambda_0+\lambda_{1} \varepsilon+\lambda_{2} \varepsilon^2+\cdots$ and $\varepsilon \in [0,R)$ for some radius of convergence $R$. As shown in~\cite{DetectingPositive_NPJ}, there exists $r \in (0,R)$ such that $g(\varepsilon)$ is a real analytic function on $(0,r)$, and moreover, its derivatives 
\begin{equation}\label{DerofG}
    \begin{aligned}
        g'(\varepsilon)  = \lambda_1\log\lambda(\varepsilon) + \lambda_1 + \mathcal{O}(\varepsilon)\,, \qquad
        g''(\varepsilon)  = 2\lambda_2\log\lambda(\varepsilon)+2\lambda_2+\lambda_1^2/\lambda(\varepsilon) + \mathcal{O}(\varepsilon)
    \end{aligned}
\end{equation}
are also analytic on $(0,r)$ (and in particular are continuous), where $\mathcal{O}(\varepsilon) \rightarrow 0$ as $\varepsilon\to 0$.

In general, if we fix an eigenvalue $\lambda_i$ of $\rho$ with multiplicity $n_i$, it will split into $n_i$ perturbative  eigenvalues of $\rho(\varepsilon)$ of the form
\begin{equation*}
    \lambda_{ij}(\varepsilon) = \lambda_i + \lambda_{ij1}\varepsilon + \lambda_{ij2}\varepsilon^2 +\cdots \qquad j = 1,\ldots,n_i\, ,
\end{equation*}
where $\lambda_{ijk}$ corresponds to the operator $T^{(k)}$ for all $k\geq 1$. Let $S(\varepsilon) = S(\rho(\varepsilon))$ be the entropy of $\rho(\varepsilon)$ as a function of $\epsilon$. The first and second derivatives of $S(\varepsilon)$ are then given by
\begin{equation*}
\begin{aligned}
     S'(\varepsilon) & = -\sum_i\sum_j \Big(\lambda_{ij1} + 2\lambda_{ij2}\varepsilon + \cdots\Big)\Big(\log\lambda_{ij}(\varepsilon) + 1\Big) \\
     S''(\varepsilon) & = -\sum_i\sum_j \bigg[\Big(2\lambda_{ij2}+6\lambda_{ij3}\varepsilon+\cdots\Big)\Big(\log\lambda_{ij}(\varepsilon)+1\Big) + \Big(\lambda_{ij1} + 2\lambda_{ij2}\varepsilon + \cdots\Big)^2/\lambda_{ij}(\varepsilon) \bigg]\,.
\end{aligned}
\end{equation*}

We now consider two cases corresponding to whether or not $\rho$ is of full rank.

\noindent \underline{$\rho$ is not of full rank}: In this case we assume without loss of generality that $\lambda_0=0$. When we take the limit $\varepsilon\to 0$, we will meet the term $\lim_{\varepsilon\rightarrow 0^+} \lambda_{0j1}\log \varepsilon$, and once $\lambda_{0j1}\neq 0$, the derivative $S'(0)$ is infinite. When $\lambda_i \neq 0$, it's clear that the terms $\lambda_{ij1}\log \lambda_{ij}(\varepsilon)$, $\lambda_{ij12}\log \lambda_{ij}(\varepsilon)$ and $\lambda_{ij1}^2/\lambda_{ij}(\varepsilon)$ are all bounded when $\varepsilon$ tends to zero. Therefore, the sum for those indices $i$ such that $\lambda_i \neq 0$ is bounded, and when computing the expression of derivatives of entropy, we can ignore the sum of these terms with $\lambda_i \neq 0$, and focus on the sum of terms with $\lambda_0 = 0$, so that
 \begin{equation*}
 \begin{aligned}
      S'(\varepsilon) & = -\sum_j  \Big(\lambda_{0j1} + 2\lambda_{0j2}\varepsilon + \cdots\Big)\Big(\log\lambda_{0j}(\varepsilon)+1\Big) + C  \\
      & =  -\sum_j  \Big(\lambda_{0j1} + 2\lambda_{0j2}\varepsilon + \cdots\Big)\Big(1 + \log \varepsilon + \log (\lambda_{0j1} + \lambda_{0j2}\varepsilon +\cdots)\Big) + C \\
      & =  -\sum_j  \Big(\lambda_{0j1} + 2\lambda_{0j2}\varepsilon + \cdots\Big) \log \varepsilon + C'
 \end{aligned}
 \end{equation*}
 where $C$ and $C'$ are finite real numbers. It is worth mentioning that a condition: $P_0 A^{(1)} P_0 \geq 0$ is contained here in order for $\lambda_{0j1}\log \lambda_{0j1}$ to be meaningful where $P_0$ is the eigen-projection associated with the $\lambda_0$-eigenspace. Now let $\Pi=P_0$ denote the projector onto the kernel of $\rho$. Since $\lim_{x\rightarrow 0^+} x\log x=0$, we have 
 \[
 S'(\varepsilon)\sim -\sum_j \lambda_{0j1} \log\varepsilon = -\text{Tr}(\Pi \, A^{(1)})\log\varepsilon \,,
 \]
where for two real-valued functions $f(\varepsilon)$ and $g(\varepsilon)$, we write $f(\varepsilon) \sim g(\varepsilon) \iff \lim_{\varepsilon\rightarrow 0^+} f(\varepsilon)/g(\varepsilon) = 1$. Similarly, for $S''(\varepsilon)$ we have 
 \begin{equation*}
     \begin{aligned}
         S''(\varepsilon) \sim -\sum_j \Big(2\lambda_{0j2} \log\varepsilon + \lambda_{0j1}^2/\lambda_{0j}(\varepsilon) \Big) \sim -\sum_j \lambda_{0j1}/\varepsilon = -\text{Tr}(\Pi \, A^{(1)})/\varepsilon\,,
     \end{aligned}
 \end{equation*}
where the second $\sim$ follows from the fact that $\lambda_{0j1}^2/\lambda_{0j}(\varepsilon) \sim \lambda_{0j1}/\varepsilon$ and $\lim_{\varepsilon\rightarrow 0^+} \varepsilon\log\varepsilon = 0$.

\noindent \underline{$\rho$ is of full rank}: In this case the distinct eigenvalues $\lambda_i$ of $\rho$ are all positive. We then have
 \begin{align*}
     \lim_{\varepsilon\rightarrow 0^+} S'(\varepsilon) & = -\sum_i\sum_j \lambda_{ij1}\Big(\log\lambda_i + 1\Big) \\
     & = -\sum_i \log\lambda_i \sum_j\lambda_{ij1} \\
     & = -\sum_i \log\lambda_i\ \text{Tr}(P_iA^{(1)}P_i) \\
     & =  - \tr (A^{(1)} \log\rho)\,, \\
\end{align*}
where the second equality follows from Eq.~\eqref{sumlam} together with the fact that $A^{(1)}$ is traceless and $\sum_{ij}\lambda_{ij1} = \sum_{i} \tr(P_iA^{(1)}) = \tr(A^{(1)}) = 0$. Now for every $i$, we let $C_i$ denote the operator given by $C_i = \sum_{j\neq i} (\lambda_j-\lambda_i)^{-1} P_j$. We then have
\begin{align*}
\lim_{\varepsilon\rightarrow 0^+} S''(\varepsilon) & = -\sum_i\sum_j \bigg[2\lambda_{ij2}\Big(\log\lambda_i+1\Big) + \lambda_{ij1}^2/\lambda_i\bigg] \\
& = -\sum_i 2\log\lambda_i \bigg[ \tr \Big(P_i A^{(2)} - P_i A^{(1)} C_i A^{(1)} \Big)\bigg] - \sum_i \tr\Big(P_i A^{(1)}P_i A^{(1)}\Big)/\lambda_i \\
& = -2\tr\Big(A^{(2)} \log \rho\Big) + \sum_i 2\log\lambda_i \tr\Big(P_i A^{(1)}C_i A^{(1)}\Big) - \sum_i \tr\Big(P_i A^{(1)}P_i A^{(1)}\Big)/\lambda_i \,.
 \end{align*}
where the second equality follows from the fact that $\sum_{ij}\lambda_{ij2} = 0$. To see this, note that by Eq.~\eqref{sumlam}, we have 
\begin{equation*}
    \begin{aligned}
        \sum_{ij}\lambda_{ij2} & = \sum_i \Big(\tr(P_i A^{(2)}) - \tr(P_i A^{(1)}C_iA^{(1)}\Big) \\
        & = - \sum_i \tr\Big(P_i A^{(1)}C_iA^{(1)}\Big) \\
        & = -\sum_i \sum_{j\neq i} \tr\Big(P_i A^{(1)} P_j A^{(1)}\Big) \Big/ (\lambda_j - \lambda_i) = 0\,,
    \end{aligned}
\end{equation*}
the second equality follows from the fact that $A^{(2)}$ is traceless and $\sum_i \tr(P_i A^{(2)}) = \tr(A^{(2)}) = 0$. For the the third equality, we use the definition of $C_i$; and for $\tr(P_i A^{(1)} P_j A^{(1)}) /(\lambda_j - \lambda_i)$ every pair of different indices $(i,j)$, we sum $\tr(P_i A^{(1)} P_j A^{(1)}) /(\lambda_j - \lambda_i)$ and $\tr(P_j A^{(1)} P_i A^{(1)}) /(\lambda_i - \lambda_j)$ which are opposite. Hence we have $\sum_{ij} \lambda_{ij2} = 0$.

\vspace*{2em}

In summary, we have proved the following:

\begin{prop}\label{DerEntropy}
Let $S(\varepsilon): = S(\rho(\varepsilon))$ be the entropy of $\rho(\varepsilon)$. Then the following statements hold.
\begin{enumerate}
\item
If $\rho$ is not of full rank, then 
\begin{align*}
S'(\varepsilon)&\sim -\tr(\Pi A^{(1)})\log\varepsilon \\
S''(\varepsilon)&\sim -\tr(\Pi A^{(1)})/\varepsilon\, ,
\end{align*}
where $\Pi$ is the orthogonal projector onto the kernel of $\rho$.
\item
If $\rho$ is of full rank, then 
\begin{align*}
\lim_{\varepsilon\rightarrow 0^+} S'(\varepsilon)  &= - \tr (A^{(1)} \log\rho) \\
\lim_{\varepsilon\rightarrow 0^+} S''(\varepsilon)  &= -\tr(W_{\mathcal{I}}) -2\tr\big(A^{(2)} \log \rho\big) \,,
\end{align*}
where $\mathcal{I}$ denotes the identity channel and $W_{\mathcal{I}}$ is the operator given by \eqref{W}.
\end{enumerate}
\end{prop}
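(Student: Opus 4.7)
The plan is to express $S(\varepsilon)$ directly via the spectrum of $\rho(\varepsilon)$ and then reduce the computation to the eigenvalue sum identities already recorded in Eq.~\eqref{sumlam}. Specifically, I will write $S(\varepsilon) = -\sum_{i,j} \lambda_{ij}(\varepsilon)\log \lambda_{ij}(\varepsilon)$, where $\lambda_{ij}(\varepsilon) = \lambda_i + \lambda_{ij1}\varepsilon + \lambda_{ij2}\varepsilon^2 + \cdots$ are the perturbative eigenvalues of $\rho(\varepsilon)$ associated to the $\lambda_i$-eigenspace of $\rho$. Using the term-wise derivative formulas in Eq.~\eqref{DerofG}, I obtain expressions for $S'(\varepsilon)$ and $S''(\varepsilon)$ as double sums indexed by the distinct eigenvalues $\lambda_i$ of $\rho$ and the splittings $j=1,\dots,n_i$.

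First, for the non-full-rank case, I assume without loss of generality that $\lambda_0 = 0$, so that $\lambda_{0j}(\varepsilon)\to 0$ as $\varepsilon\to 0^+$ while every other $\lambda_i > 0$. For any $i$ with $\lambda_i>0$, the contributions $\lambda_{ij1}\log\lambda_{ij}(\varepsilon)$, $\lambda_{ij2}\log\lambda_{ij}(\varepsilon)$, and $\lambda_{ij1}^2/\lambda_{ij}(\varepsilon)$ all remain bounded as $\varepsilon\to 0^+$, so they contribute finite constants which I absorb into an error term. The only asymptotically dominant contribution comes from the $\lambda_0=0$ cluster: expanding $\log\lambda_{0j}(\varepsilon) = \log\varepsilon + \log(\lambda_{0j1}+\lambda_{0j2}\varepsilon+\cdots)$ and isolating the singular logarithm yields $S'(\varepsilon)\sim -\big(\sum_j \lambda_{0j1}\big)\log\varepsilon$. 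By Eq.~\eqref{sumlam} applied to the projector $P_0=\Pi$, this sum equals $\tr(\Pi A^{(1)})$, giving the first asymptotic. An identical analysis for $S''(\varepsilon)$ using that $\lambda_{0j1}^2/\lambda_{0j}(\varepsilon)\sim \lambda_{0j1}/\varepsilon$ and that $\varepsilon\log\varepsilon\to 0$ yields the second.

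For the full-rank case, all $\lambda_i>0$ so the derivatives extend continuously to $\varepsilon=0$, and the computation becomes algebraic. At first order, I collect terms by distinct eigenvalue and apply Eq.~\eqref{sumlam} to get $\sum_j\lambda_{ij1} = \tr(P_i A^{(1)})$, so that $\lim_{\varepsilon\to 0^+} S'(\varepsilon) = -\sum_i \log\lambda_i\,\tr(P_iA^{(1)}) = -\tr(A^{(1)}\log\rho)$. At second order, I again use Eq.~\eqref{sumlam} together with the expression for $T^{(2)}$ in Eq.~\eqref{OperatorT} to get $\sum_j \lambda_{ij2} = \tr(P_iA^{(2)}) - \tr(P_iA^{(1)}C_iA^{(1)})$, with $C_i=\sum_{j\neq i}(\lambda_j-\lambda_i)^{-1}P_j$. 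Substituting and simplifying gives two pieces: a $-2\tr(A^{(2)}\log\rho)$ contribution and a quadratic-in-$A^{(1)}$ contribution which I identify with $-\tr(W_{\mathcal{I}})$ by matching the explicit form of $W_{\mathcal{I}}$ defined in Eq.~\eqref{W}.

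The step I expect to be most delicate is showing that the $\sum_{ij}\lambda_{ij2}$ correction to $-2\tr(A^{(2)}\log\rho)$ assembles cleanly into $-\tr(W_{\mathcal{I}})$. The key observation is the pairwise cancellation: for each unordered pair of distinct indices $i\neq j$, the terms $\tr(P_iA^{(1)}P_jA^{(1)})/(\lambda_j-\lambda_i)$ and $\tr(P_jA^{(1)}P_iA^{(1)})/(\lambda_i-\lambda_j)$ are opposite, which forces $\sum_{ij}\lambda_{ij2}=0$ and lets me rearrange the logarithmic prefactors into the symmetric form defining $W_{\mathcal{I}}$. Once this bookkeeping is in place, combining it with the quadratic term $-\sum_i \tr(P_iA^{(1)}P_iA^{(1)})/\lambda_i$ coming from the $\lambda_{ij1}^2/\lambda_i$ contribution completes the identification and finishes the proof.
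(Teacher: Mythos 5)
Your proposal is correct and follows essentially the same route as the paper's own argument: expanding $S(\varepsilon)$ over the perturbative eigenvalue clusters, using the trace identities of Eq.~\eqref{sumlam} (with the $T^{(1)}$, $T^{(2)}$ operators of Eq.~\eqref{OperatorT}) to isolate the singular $\log\varepsilon$ and $1/\varepsilon$ contributions from the kernel cluster in the rank-deficient case, and assembling the second-order full-rank terms into $-\tr(W_{\mathcal{I}})-2\tr(A^{(2)}\log\rho)$ via the same pairwise cancellation giving $\sum_{ij}\lambda_{ij2}=0$. No substantive differences from the paper's proof.
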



To conclude this section, we introduce some notation which is needed not only for the definition of the operator $W_{\mathcal{I}}$ appearing in Proposition~\ref{DerEntropy}, but also for the three criteria in the main text. Given a quantum channel $\M$ and a quantum state $\rho$, let $\{\lambda_i\}$ denote the set of distinct eigenvalues of $\M(\rho)$, and let $P_i$ denote the projector onto the $\lambda_i$-eigenspace of $\M(\rho)$ for all $i$, so that $\M(\rho) = \sum_i \lambda_i P_i$. 

\vspace*{2em}

\noindent \underline{$\M(\rho)$ is not of full rank}: Without loss of generality assume $\lambda_0 = 0$, so that $P_0 = \Pi$ is the orthogonal projector onto the kernel of $\M(\rho)$. We then let $W_\M^{(0)}$ denote the operator given by 
\begin{equation}\label{W0}
    W_\M^{(0)} = P_0 \M(A^{(1)})C_0 \,\M(A^{(1)})P_0\,, \quad C_0 = \sum_{j\neq 0} (\lambda_j-\lambda_0)^{-1} P_j \, .
\end{equation} 
As discussed above, when we consider the output state of $\M$ corresponding to a perturbative expansion $\rho(\varepsilon)$ as the input state, the eigenvalue $\lambda_0 = 0$ of $\M(\rho)$ will split into the perturbative eigenvalues of $\M(\rho(\varepsilon))$ as
\begin{equation*}
    \lambda_{0j}(\varepsilon) = \lambda_{0j1} \varepsilon + \lambda_{0j2}\varepsilon^2 + \cdots\,,
\end{equation*}
for $j = 1,\cdots,n_0$, and we have $\sum_j \lambda_{0j2} = \tr(\Pi A^{(2)} - W^{(0)}_\M)$.

\noindent \underline{$\M(\rho)$ is of full rank}: In this case we have $\lambda_i > 0$ for all $i$. We then then let $W_\M$ denote the operator given by
\begin{equation}\label{W}
    W_{\M} = \sum_i P_i \M(A^{(1)})V_i\,\M(A^{(1)})P_i \, ,\quad V_i=\frac{P_i}{\lambda_i}-2(\log \lambda_i)\cdot\left(\sum_{j\neq i} (\lambda_j-\lambda_i)^{-1}P_j\right)\, .
\end{equation}

\section{The proof of Criterion~\ref{cri1}}

Let $\rho$ be an initial state for a quantum channel $\N$, let $\rho(\varepsilon)$ be the perturbative expansion of $\rho$ as defined in~\eqref{PertEx}, and let $f(\varepsilon)$ be the function given by
\begin{equation}\label{function}
    f(\varepsilon) = I_c\left(\rho(\varepsilon), \N\right) - I_c(\rho, \N)\,.
\end{equation}
Since $f(0) = 0$ and $f'(\varepsilon) = S'(\N(\rho(\varepsilon))) - S'(\N^c(\rho(\varepsilon)))$ (which is continuous), once the value of $f'(0)$ is positive (or infinity), we can find an interval $(0,\eta)$ such that $f'(\varepsilon) > 0$ for all $\varepsilon \in (0,\eta)$, implying a positive value of $f(\varepsilon)$. We can then use Proposition~\ref{DerEntropy} regarding the derivatives of entropy to compare the first order derivatives of entropy for $\N(\rho(\varepsilon))$ and $\N^c(\rho(\varepsilon))$:
\begin{thm}\label{thm1}
    The following statements hold.
    \begin{enumerate}
        \item[(1).] Suppose $\N(\rho)$ and $\N^c(\rho)$ are not both of full rank, and suppose
        \begin{equation}\label{notfull}
            \tr\Big(\Pi [\N(A^{(1)})]\Big) > \tr\Big(\Pi^c[\N^c(A^{(1)})]\Big)\,,
        \end{equation}
        where $\Pi$ and $\Pi^c$ denote the orthogonal projectors onto $\emph{Ker}(\N(\rho))$ and $\emph{Ker}(\N^c(\rho))$ respectively. Then there exists a positive value of $f(\varepsilon)$.
        \item[(2).] Suppose $\N(\rho)$ and $\N^c(\rho)$ are both of full rank, and suppose 
        \begin{equation}\label{full}
            \tr\Big(\N^c(A^{(1)})\log \N^c(\rho)\Big) > \tr\Big(\N(A^{(1)})\log \N(\rho)\Big)\,.
        \end{equation}
        Then there exists a positive value of $f(\varepsilon)$. 
    \end{enumerate}
\end{thm}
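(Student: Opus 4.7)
The plan is to reduce both parts to a single strategy: since $f(0) = 0$ and $f$ is continuous, it will suffice to exhibit a right neighborhood of $\varepsilon = 0$ on which $f'(\varepsilon) > 0$, so that $f$ is strictly increasing there and takes positive values. I would begin by writing
\[
f'(\varepsilon) = S'\bigl(\N(\rho(\varepsilon))\bigr) - S'\bigl(\N^c(\rho(\varepsilon))\bigr)\, ,
\]
and noting that $\N(\rho(\varepsilon)) = \N(\rho) + \sum_i \varepsilon^i \N(A^{(i)})$ is itself a perturbative expansion of $\N(\rho)$ in the sense of~\eqref{PertEx} (the operators $\N(A^{(i)})$ are traceless Hermitian because $\N$ is CPTP and $A^{(i)}$ are so), and similarly for $\N^c(\rho(\varepsilon))$. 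This places us in a position to apply Proposition~\ref{DerEntropy} directly to each term.

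For part (2), both $\N(\rho)$ and $\N^c(\rho)$ are of full rank, so Proposition~\ref{DerEntropy}(2) yields
\[
\lim_{\varepsilon\to 0^+} f'(\varepsilon) = \tr\bigl(\N^c(A^{(1)})\log\N^c(\rho)\bigr) - \tr\bigl(\N(A^{(1)})\log\N(\rho)\bigr)\, ,
\]
which is strictly positive by the hypothesis~\eqref{full}. Continuity of $f'$ on a right neighborhood of $0$ then gives $f'(\varepsilon) > 0$ there, and the result follows.

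For part (1), I would first observe that both $\tr(\Pi\,\N(A^{(1)}))$ and $\tr(\Pi^c\,\N^c(A^{(1)}))$ are automatically non-negative: since $\N(\rho(\varepsilon))$ must remain positive semi-definite, the first-order coefficients $\lambda_{0j1}$ of the perturbative eigenvalues splitting from the zero eigenvalue of $\N(\rho)$ satisfy $\lambda_{0j1} \geq 0$, and their sum equals $\tr(\Pi\,\N(A^{(1)}))$ by~\eqref{sumlam}; the same argument applies to $\N^c$. Thus the hypothesis~\eqref{notfull} forces $\tr(\Pi\,\N(A^{(1)})) > 0$ strictly, which in particular excludes the case $\Pi = 0$ (i.e., $\N(\rho)$ of full rank). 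I would then split into the remaining two subcases according to whether $\N^c(\rho)$ is of full rank. Proposition~\ref{DerEntropy}(1) gives $S'(\N(\rho(\varepsilon))) \sim -\tr(\Pi\,\N(A^{(1)}))\log\varepsilon \to +\infty$, while $S'(\N^c(\rho(\varepsilon)))$ is either bounded (full-rank subcase, by Proposition~\ref{DerEntropy}(2)) or diverges at the strictly smaller rate $-\tr(\Pi^c\,\N^c(A^{(1)}))\log\varepsilon$; subtracting yields $f'(\varepsilon) \to +\infty$ in every subcase, so $f' > 0$ on some interval $(0,\eta)$.

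The main obstacle is the degenerate edge case within part (1) where $\N^c(\rho)$ is rank-deficient but $\tr(\Pi^c\,\N^c(A^{(1)})) = 0$: the $\log\varepsilon$ asymptotic from Proposition~\ref{DerEntropy}(1) is then vacuous, and I would need to verify separately that the subleading behavior of $S'(\N^c(\rho(\varepsilon)))$ remains controlled. A short asymptotic expansion should confirm that when all first-order coefficients $\lambda_{0j1}$ for $\N^c$ vanish, the perturbative eigenvalues splitting from zero behave like $\lambda_{0j2}\varepsilon^2$, so the leading singular contribution to $S'(\N^c(\rho(\varepsilon)))$ is of order $\varepsilon\log\varepsilon$, which vanishes as $\varepsilon\to 0^+$. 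Hence $S'(\N^c(\rho(\varepsilon)))$ stays bounded in this edge case, and the divergence of $S'(\N(\rho(\varepsilon)))$ still dominates, completing the argument.
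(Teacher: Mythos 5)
Your proposal is correct and takes essentially the same route as the paper: both reduce the claim to the sign (or divergence to $+\infty$) of $\lim_{\varepsilon\to 0^+} f'(\varepsilon)$ by writing $f'(\varepsilon)=S'(\N(\rho(\varepsilon)))-S'(\N^c(\rho(\varepsilon)))$ and invoking the entropy-derivative asymptotics of Proposition~\ref{DerEntropy}. The only differences are organizational refinements on your side: where the paper splits part (1) according to which of $\N(\rho)$, $\N^c(\rho)$ is full rank (including a subcase that would require $\tr\big(\Pi^c\,\N^c(A^{(1)})\big)<0$), you instead observe that the kernel traces are automatically nonnegative for one-sided state perturbations (an observation the paper itself uses in the proof of Criterion~\ref{cri2}) so that hypothesis~\eqref{notfull} forces $\N(\rho)$ to be rank-deficient, and you explicitly dispose of the degenerate subcase $\tr\big(\Pi^c\,\N^c(A^{(1)})\big)=0$, where the paper's ``$\sim$'' asymptotic is vacuous but, as you correctly argue, the contribution of the eigenvalues splitting from zero is $O(\varepsilon\log\varepsilon)$ and hence harmless.
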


\begin{proof}
Notice that $f'(\varepsilon) = S'(\N(\rho(\varepsilon))) - S'(\N^c(\rho(\varepsilon)))$ and $\N(\rho(\varepsilon)) = \N(\rho) + \sum_{i=1}^\infty \varepsilon^i \N(A^{(i)})$.   
\begin{itemize}
    \item[(1)] If $\N(\rho)$ and $\N^c(\rho)$ are both of not full rank, then let $\Pi$ and $\Pi^c$ denote the orthogonal projectors onto $\text{Ker}(\N(\rho))$ and $\text{Ker}(\N^c(\rho))$ respectively. According to Proposition~\ref{DerEntropy}, the derivatives $S'(\N(\rho(\varepsilon))) \sim -\tr\big(\Pi \N(A^{(1)}) \big)\log \varepsilon$ and $S'(\N^c(\rho(\varepsilon))) \sim -\tr\big(\Pi^c \N^c(A^{(1)})\big) \log \varepsilon$. Therefore, in order to make sure $\lim_{\varepsilon\rightarrow 0^+} f'(\varepsilon) > 0$, we only need 
    \[
    \tr\big(\Pi\, \N(A^{(1)}) \big) > \tr\big(\Pi^c \, \N^c(A^{(1)}) \big)\, .
    \]
    \item[(2)] If one of $\N(\rho)$ and $\N^c(\rho)$ is of full rank, and another is not of full rank. Without loss of generality, suppose that $\N(\rho)$ is of full rank and $\N^c(\rho)$ is not of full rank. Let $\Pi^c$ denote the orthogonal projectors $\emph{Ker}(\N^c(\rho))$, according to Proposition~\ref{DerEntropy}, we can know that 
    $\lim_{\varepsilon\rightarrow 0^+} S'(\N(\rho(\varepsilon))) = - \tr(\N(A^{(1)}) \, \log\N(\rho))$ which is bounded. But the derivative $S'(\N^c(\rho(\varepsilon))) \sim -\tr\big(\Pi^c \N^c(A^{(1)})\big) \log \varepsilon$ and be can tend to infinite when $\tr\big(\Pi^c \N^c(A^{(1)})\big) < 0$. While the orthogonal projectors onto $\emph{Ker}(\N(\rho))$ is $\Pi = 0$, and in such a case, $\tr\big(\Pi\, \N(A^{(1)}) \big) = 0$ always holds. Totally, in order to make sure $\lim_{\varepsilon\rightarrow 0^+} f'(\varepsilon) > 0$, we only need 
    \[
    \tr\big(\Pi\, \N(A^{(1)}) \big) > \tr\big(\Pi^c \, \N^c(A^{(1)}) \big)\, ,
    \]
    which is same as the first case.
    \item[(3)]  If $\N(\rho)$ and $\N^c(\rho)$ are both of full rank, then according to Proposition~\ref{DerEntropy}, we have
    \[
    \lim_{\varepsilon\rightarrow 0^+} f'(\varepsilon) = - \tr\Big(\N(A^{(1)})\log \N(\rho)\Big) + \tr\Big(\N^c(A^{(1)})\log \N^c(\rho)\Big) \,,
    \]
    thus in order to make sure $\lim_{\varepsilon\rightarrow 0^+} f'(\varepsilon) > 0$, we only need Eq.~\eqref{full}.
\end{itemize}
\end{proof}

The first part of Theorem~\ref{thm1} is our Criterion~\ref{cri1} in the main text. However, the following Proposition shows that the second part in Theorem~\ref{thm1} is highly restrictive and can not be used to determine the super-additivity of one-shot quantum capacity. The key of the proof below is that when $\rho_i$ achieves the one-shot quantum capacity of $\N_i$, the functions $f_i(\varepsilon) := I_c(\rho_i(\varepsilon),\N_i) - I_c(\rho_i,\N_i)$ has non-positive derivative $f_i'(0) \leq 0$, otherwise, the state $\rho_i(\varepsilon)$ has higher coherent information than $\rho_i$, which violating the fact that $\rho_i$ are optimal to achieve the one-shot quantum capacity of $\N_i$.

\begin{prop}\label{prop1}
Let $\rho_1$ and $\rho_2$ be optimal states for quantum channels $\N_1$ and $\N_2$ respectively, let $\N=\N_1\otimes \N_2$, let $\rho=\rho_1\otimes \rho_2$, and let $f$ be the function as defined in \eqref{function}. If $\N(\rho)$ and $\N^c(\rho)$ are both of full rank, then for any bipartite traceless Hermitian operator $A^{(1)}$ associated with the perturbative expansion $\rho(\varepsilon)$ as defined in~\eqref{PertEx}\, $f'(0) \leq 0$.
\end{prop}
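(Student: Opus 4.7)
The plan is to reduce the claim to the optimality of each marginal $\rho_i$ for the corresponding channel $\N_i$ by exploiting the product structure. The key observations are that partial traces of $\rho(\varepsilon)$ give valid perturbative expansions of the $\rho_i$, and that $f'(0)$ decomposes additively across the two tensor factors.

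First I would apply the full-rank formula for $f'(0)$ derived in the proof of Theorem~\ref{thm1}(2) (which itself is a direct consequence of Proposition~\ref{DerEntropy}(2)):
\[
f'(0) = \tr\bigl(\N^c(A^{(1)})\log\N^c(\rho)\bigr) - \tr\bigl(\N(A^{(1)})\log\N(\rho)\bigr)\, .
\]
Since a Stinespring dilation of $\N=\N_1\otimes\N_2$ may be taken as the tensor of Stinespring dilations of the $\N_i$, the complement can be chosen as $\N^c = \N_1^c\otimes\N_2^c$ without affecting coherent information. Consequently $\N(\rho)=\N_1(\rho_1)\otimes\N_2(\rho_2)$ and $\log\N(\rho) = \log\N_1(\rho_1)\otimes I + I\otimes\log\N_2(\rho_2)$, with an analogous factorization for $\N^c(\rho)$.

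Setting $A_i := \tr_{\bar i}A^{(1)}$, the fact that each $\N_j$ and $\N_j^c$ is trace-preserving gives $\tr_{B_2}\bigl((\N_1\otimes\N_2)(A^{(1)})\bigr) = \N_1(A_1)$, and similarly for the other index, so the trace pairings decouple to yield
\[
f'(0) = f_1'(0) + f_2'(0)\, ,
\]
where $f_i'(0) = \tr(\N_i^c(A_i)\log\N_i^c(\rho_i)) - \tr(\N_i(A_i)\log\N_i(\rho_i))$ is, again by Proposition~\ref{DerEntropy}(2) (applicable since full rank of $\N_i(\rho_i)$ and $\N_i^c(\rho_i)$ is inherited from full rank of $\N(\rho)$ and $\N^c(\rho)$), the derivative at $0$ of the function $f_i(\varepsilon) := I_c(\rho_i(\varepsilon),\N_i) - I_c(\rho_i,\N_i)$ along the one-parameter family $\rho_i(\varepsilon) := \tr_{\bar i}\rho(\varepsilon) = \rho_i + \varepsilon A_i + O(\varepsilon^2)$.

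To finish, I would note that $\rho_i(\varepsilon)$ is a valid state for $\varepsilon\in(-r,r)$ as the partial trace of a valid state, and $A_i$ is Hermitian and traceless. Optimality of $\rho_i$ for $\N_i$ then gives $I_c(\rho_i(\varepsilon),\N_i) \leq \Q(\N_i) = I_c(\rho_i,\N_i)$, so each $f_i$ attains a local maximum at $\varepsilon = 0$; since $f_i$ is differentiable there (by the full-rank assumption), one concludes $f_i'(0) = 0$, and summing gives $f'(0) = 0$, hence $f'(0)\leq 0$ as claimed. The only step requiring genuine care is the decoupling of the trace pairings, which hinges squarely on the tensor-product factorization of $\N$, $\N^c$, and $\rho$, together with trace preservation of the individual channels; everything else is a direct application of results already established.
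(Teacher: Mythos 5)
Your proposal is correct and takes essentially the same route as the paper: the paper likewise reduces $f'(0)$ to the sum of the marginal first-order terms (via a Bloch-basis expansion of $A^{(1)}$ rather than your direct partial-trace computation, but resting on the same facts that $\N^c=\N_1^c\otimes\N_2^c$, $\log(A\otimes B)=\log A\otimes\mathds{1}+\mathds{1}\otimes\log B$, and trace preservation), and then invokes optimality of each $\rho_i$. The only difference is in the last step: the paper concludes merely $f_i'(0)\le 0$ (which suffices, and is all that is available if the perturbation is admissible only for $\varepsilon\ge 0$, as in the one-sided limits used throughout the supplement), whereas your stronger claim $f_i'(0)=0$ relies on two-sided validity of $\rho(\varepsilon)$ on $(-r,r)$; either way the stated bound $f'(0)\le 0$ follows.
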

\begin{proof}
In such a case we have $\rho(\varepsilon) = \rho_1\otimes\rho_2 + \sum_{i=1}^\infty \varepsilon^i A^{(i)}$, where $A^{(i)}$ are bipartite traceless Hermitian operators. For all $i$, let $A^{(1)}_k$ be the partial trace of the traceless operator $A^{(i)}$ onto the $k$th system for $k=1,2$, which is also traceless. Since $\N_k(\rho_k)$ and $\N_k^c(\rho_k)$ are both full rank for both $k = 1,2$, then by Theorem~\ref{thm1} we have
\begin{equation*}
        \tr \Big(\N^c_k(A^{(1)}_k)\log \N^c_k(\rho_k)\Big) \leq \tr \Big(\N_k(A^{(1)}_k)\log \N_k(\rho_k)\Big)\, .
\end{equation*}

Now consider the Bloch representation of $A^{(1)} = \sum_{i=0}^{d_1^2-1}\sum_{j=0}^{d_2^2-1} r_{ij} U_i^{(1)} \otimes U_j^{(2)}.$, where $\{U_0^{(k)} = \mathds{1}/d_k,U_i^{(k)}\}_{i=1}^{d^2_k-1}$ is a basis for the space consisting of all bipartite Hermitian operators, so that $d_k$ is the dimension of Hilbert space $\mathcal{H}_k$ for $k = 1,2$. We then have
    \begin{align*}
        & \tr \Big[\N^c_1\otimes\N_2^c(A^{(1)})\log \Big(\N^c_1(\rho_1)\otimes\N_2^c(\rho_2)\Big)\Big] \\
            & \quad= \sum_{i=0}^{d_1^2-1}\sum_{j=0}^{d_2^2-1} r_{ij} \tr \bigg[\N^c_1(U_i^{(1)})\otimes \N^c_2(U_j^{(2)}) \log\Big(\N^c_1(\rho_1)\otimes\N_2^c(\rho_2) \Big)\bigg] \\
            & \quad = \tr \Big(\N_1^c(A^{(1)}_1)\log \N_1^c(\rho_1)\Big) + \tr \Big(\N_2^c(A^{(2)}_2)\log\N_2^c(\rho_2)\Big)\,,
        \end{align*}
where $A^{(1)}_1 =\sum_{i=0}^{d_1^2-1} r_{i0} U_i^{(1)},\, A^{(1)}_2 = \sum_{j=0}^{d_2^2-1} r_{0j} U_j^{(2)}$ are the reduced operators of $A^{(1)}$. The second equality holds due to the fact that $\log (A\otimes B) = \log A\otimes \mathds{1} + \mathds{1}\otimes \log B$ and $\tr\ \N_1^c(U_i^{(1)}) = \tr\ \N_2^c(U_j^{(2)}) = 0$ for $i,j\geq 1$ since any quantum channel is trace preserving. Therefore,
    \begin{equation*}
        \begin{aligned}
            & \tr \Big[\N^c_1\otimes\N_2^c(A^{(1)})\log\Big(\N^c_1\otimes\N_2^c(\rho_1\otimes \rho_2)\Big)\Big] = \sum_{i=1}^2 \tr \Big(\N^c_i(A^{(1)}_i)\log \N^c_i(\rho_i)\Big) \\
            & \leq \sum_{i=1}^2 \tr \Big(\N_i(A^{(1)}_i)\log \N_i(\rho_i) \Big) = \tr \Big[\N_1\otimes\N_2(A^{(1)})\log\Big(\N_1\otimes\N_2(\rho_1\otimes \rho_2)\Big)\Big]\,,
        \end{aligned}
    \end{equation*}
    hence $f'(0) \leq 0$ as desired.
\end{proof}

\section{The proof of Criterion~\ref{cri2}}

When $\rho$ is optimal for $\N$, and $\N(\rho)$ and $\N^c(\rho)$ are both of full rank, then by Theorem~\ref{thm1} we know that $f'(0) \leq 0$ for all traceless Hermitian operators $A^{(1)}$. Moreover, when $\rho_k$ is optimal for $\N_k$ for $k=1,2$, and $\N_k(\rho_k)$ and $\N^c(\rho_k)$ are all of full rank, Proposition~\ref{prop1} rules out the possibility of using first-order derivatives of entropy to determine superadditivity of one-shot quantum capacity for $\N_1\otimes\N_2$. As such, we then consider the situation that the first-order derivative $f'(0) = 0$ and the second-order derivative $f''(0)>0$, which also implies a positive value of $f(\varepsilon)$. 

Before considering the second derivative of the function $f$, we first derive a condition on $A^{(1)}$ that implies $f'(0) = 0$. 
\begin{rem}
    When $\N(\rho)$ and $\N^c(\rho)$ are both of full rank, it's clear that $A^{(1)}$ must satisfy the equality
\begin{equation*}
    \tr\Big(\N^c(A^{(1)})\log \N^c(\rho)\Big) = \tr\Big(\N(A^{(1)})\log \N(\rho)\Big)\,,
\end{equation*}
which is equivalent to $f'(0) = 0$. 

When $\N(\rho)$ and $\N^c(\rho)$ are not both of full rank, a simple but effective condition on $A^{(1)}$ is 
\begin{equation}\label{conditionA}
    \tr\Big(P_i\N(A^{(1)})\Big) = \tr\Big(Q_s\N^c(A^{(1)})\Big) = 0\,, \quad \text{for}\quad \forall i,s\,,
\end{equation}
where $P_i$ is the projector onto the eigenspace corresponding to the eigenvalue $\lambda_i$ of $\N(\rho)$, and $Q_s$ is the projector onto the eigenspace corresponding to the eigenvalue $\mu_s$ of $\N^c(\rho)$. In fact, according to Eq.~\eqref{OperatorT}~\eqref{sumlam} and \eqref{DerofG}, the perturbative eigenvalues $\lambda_{ij}(\varepsilon)$ and $\mu_{st}(\varepsilon)$ split from the eigenvalue $\lambda_i$ and $\mu_s$ are in the form of: 
$$
\lambda_{ij}(\varepsilon) = \lambda_i + \lambda_{ij1}\varepsilon + \lambda_{ij2}\varepsilon^2 + \cdots \,, \quad \mu_{st}(\varepsilon) = \mu_s + \mu_{st1}\varepsilon + \mu_{st2}\varepsilon^2 + \cdots\,,
$$
We now consider the derivative of $\sum_{ij}\lambda_{ij}(\varepsilon) \log \lambda_{ij}(\varepsilon)$ which is corresponding to the sum $\sum_{ij} \lambda_{ij}'(\varepsilon) \big(\log \lambda_{ij}(\varepsilon) + 1\big)$. While $\sum_{ij} \lambda_{ij1} = \tr(\N(A^{(1)})) = 0$, we only need to compute the sum $\sum_{ij} \lambda_{ij}'(\varepsilon) \log \lambda_{ij}(\varepsilon)$.

Since for all non-zero $\lambda_i$:
$$
\lim_{\varepsilon\rightarrow 0^+}  \sum_j \lambda_{ij}'(\varepsilon) \log \lambda_{ij}(\varepsilon) = \sum_j \lambda_{ij1} \log \lambda_i = \tr \big(P_i \N(A^{(1)}) \big) \log \lambda_i = 0 \, .
$$
When $\lambda_i = 0$, as $\lambda_{ij}(\varepsilon) \geq 0$, we can know that $\lambda_{ij1} \geq 0$. Let $\Pi$ denote the orthogonal projector onto $\emph{Ker}(\N(\rho))$, according to Eq.~\eqref{conditionA}, we have $\sum_j \lambda_{ij1} = \tr (\Pi\,\N(A^{(1)})) = 0$, thus $\lambda_{ij1} = 0$ and
\[
\lim_{\varepsilon\rightarrow 0^+} \sum_j \lambda_{ij}(\varepsilon)'\log\lambda_{ij}(\varepsilon) = \lim_{\varepsilon\rightarrow 0^+} \tr\Big(\Pi\,\N(A^{(1)})\Big) \log \varepsilon + \sum_{j } \lambda_{ij1}\log \lambda_{ij1} = 0\,.
\] 
Therefore, with the condition~\eqref{conditionA}, we can obtain that $\lim_{\varepsilon\rightarrow 0^+} S'(\N(\rho(\varepsilon))) = 0$. Then together with the same argument for $\lim_{\varepsilon\rightarrow 0^+} S'(\N^c(\rho(\varepsilon))) = 0$, we have $\lim_{\varepsilon\rightarrow 0^+} f'(\varepsilon) = 0$.
\end{rem}

Hence with the condition~\eqref{conditionA}, we have $\lim_{\varepsilon\rightarrow 0^+} f(\varepsilon) = \lim_{\varepsilon\rightarrow 0^+} f'(\varepsilon) = 0$. In order to prove Criterion~\ref{cri2} for $\N(\rho)$ and $\N^c(\rho)$ are not both of full rank, we only need to show $\lim_{\varepsilon\rightarrow 0^+} f''(\varepsilon)>0$.
\begin{thm}
    Let $\rho$ be an initial state for a quanrtum channel $\N$, $\rho(\varepsilon) = \rho + \sum_i \varepsilon^i A^{(i)}$ be a perturbative expansion of $\rho$  as defined in~\eqref{PertEx}, and let $f(\varepsilon)$ be the function as defined in~\eqref{function}. Suppose $\N(\rho)$ and $\N^c(\rho)$ are not both of full rank, suppose 
        \begin{equation*}
            \tr\big(P_i \N(A^{(1)}\big) = \tr\big(Q_s\N^c(A^{(1)})\big) = 0 \,, \forall\, i,s
        \end{equation*}
        where $P_i$ are the eigen-projections of $\N(\rho)$ and $Q_s$ are the eigen-projections of $\N^c(\rho)$, and suppose 
        \begin{equation}\label{Cor}
        \tr \left(\Pi^c\, \N^c(A^{(2)}) - W^{(0)}_{\N^c}\right) <  \tr \left(\Pi\, \N(A^{(2)}) - W^{(0)}_{\N}\right)\,,
        \end{equation}        
        where $W^{(0)}_{\N}$ and $W^{(0)}_{\N^c}$ are defined in Eq.~\eqref{W0}, and $\Pi$ and $\Pi^c$ are the projectors onto kernel of $\N(\rho)$ and $\N^c(\rho)$ respectively. Then there exists a positive value of $f(\varepsilon)$.
\end{thm}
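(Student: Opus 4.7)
The plan is to show that $f''(\varepsilon) \to +\infty$ as $\varepsilon \to 0^+$. Combined with $f(0) = 0$ and $\lim_{\varepsilon \to 0^+} f'(\varepsilon) = 0$ (both established in the remark preceding the theorem from the hypothesis on $A^{(1)}$), this will force $f$ to be strictly positive on a right-neighborhood of $0$: once $f'' > 0$ near $0^+$, $f'$ is strictly increasing there and hence positive (using $f'(0^+) = 0$), so $f$ itself is strictly increasing from $f(0) = 0$.

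First I will extract the structure of the perturbative eigenvalues of $\N(\rho(\varepsilon))$ arising from the kernel of $\N(\rho)$. Writing $\N(\rho) = \sum_i \lambda_i P_i$ with $\lambda_0 = 0$ and $P_0 = \Pi$, the eigenvalue perturbation theory reviewed in Section~I splits the eigenvalue $0$ into perturbative eigenvalues $\lambda_{0j}(\varepsilon) = \lambda_{0j1}\varepsilon + \lambda_{0j2}\varepsilon^2 + O(\varepsilon^3)$ of $\N(\rho(\varepsilon))$. The hypothesis $\tr(\Pi\,\N(A^{(1)})) = 0$, combined with the positivity constraint $\lambda_{0j}(\varepsilon) \geq 0$ (valid because $\N(\rho(\varepsilon))$ is a state), forces $\lambda_{0j1} = 0$ for every $j$, so these eigenvalues begin at order $\varepsilon^2$ with $\lambda_{0j2} \geq 0$. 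Applying~\eqref{sumlam} with $\rho$ there replaced by $\N(\rho)$ and $A^{(i)}$ by $\N(A^{(i)})$, together with the definition~\eqref{W0}, yields
\begin{equation*}
\sum_j \lambda_{0j2} \;=\; \tr\bigl(\Pi\,\N(A^{(2)}) - W^{(0)}_{\N}\bigr),
\end{equation*}
and the analogous identity $\sum_j \mu_{0j2} = \tr(\Pi^c\,\N^c(A^{(2)}) - W^{(0)}_{\N^c})$ holds for $\N^c(\rho)$.

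Next I derive the leading asymptotics of $S''$ on each side. For a single perturbative eigenvalue $x(\varepsilon) = a\varepsilon^2 + O(\varepsilon^3)$ with $a \geq 0$, direct differentiation of $g(\varepsilon) = x(\varepsilon)\log x(\varepsilon)$ gives $g''(\varepsilon) = 4a\log\varepsilon + O(1)$ when $a > 0$, and $g''(\varepsilon) = o(1)$ when $a = 0$ (in which case $x$ begins at order $\varepsilon^k$ for some $k \geq 3$ and every surviving term of $g''$ tends to $0$). Summing over the kernel of $\N(\rho)$ and observing that eigenvalues away from the kernel contribute only bounded terms (their perturbative splittings remain in a neighborhood of strictly positive eigenvalues of $\N(\rho)$ where $x\log x$ is smooth, cf.\ Proposition~\ref{DerEntropy}), I obtain
\begin{equation*}
S''\bigl(\N(\rho(\varepsilon))\bigr) \;=\; -4\,\tr\bigl(\Pi\,\N(A^{(2)}) - W^{(0)}_{\N}\bigr)\log\varepsilon \;+\; O(1),
\end{equation*}
with the analogous expression for $\N^c$. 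Subtracting,
\begin{equation*}
f''(\varepsilon) \;=\; -4\bigl[\tr(\Pi\,\N(A^{(2)}) - W^{(0)}_{\N}) - \tr(\Pi^c\,\N^c(A^{(2)}) - W^{(0)}_{\N^c})\bigr]\log\varepsilon \;+\; O(1).
\end{equation*}
Hypothesis~\eqref{Cor} makes the bracketed coefficient strictly positive, and since $-\log\varepsilon \to +\infty$ as $\varepsilon \to 0^+$, we conclude $f''(\varepsilon) \to +\infty$, which completes the argument by the opening paragraph.

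The main technical hurdle is rigorously controlling the asymptotics when the kernel sector contains mixed splittings: some $\lambda_{0j2} > 0$ alongside others with $\lambda_{0j2} = 0$, where one must verify that the zero-coefficient eigenvalues contribute only subleading terms. A separate case analysis is also needed when exactly one of $\N(\rho)$, $\N^c(\rho)$ is rank-deficient: if only $\N^c(\rho)$ is rank-deficient, then $\Pi = 0$ and $W^{(0)}_{\N} = 0$ make the right-hand side of~\eqref{Cor} non-negative while the left-hand side vanishes, so the hypothesis is unrealizable; if only $\N(\rho)$ is rank-deficient, the $\N^c$ side of $f''$ contributes $O(1)$ and the argument above goes through unchanged.
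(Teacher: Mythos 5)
Your proposal is correct and follows essentially the same route as the paper's proof: use the hypothesis on $A^{(1)}$ together with positivity of the perturbed eigenvalues to kill the first-order splittings (giving $f'(0^+)=0$), identify $\sum_j\lambda_{0j2}=\tr\big(\Pi\,\N(A^{(2)})-W^{(0)}_{\N}\big)$ via the Kato-type formula, show that under \eqref{Cor} the kernel sectors make $f''(\varepsilon)\to+\infty$ like a positive multiple of $-\log\varepsilon$, and conclude positivity of $f$ from $f(0)=0$ and $f'(0^+)=0$. The only blemish is cosmetic: in your final sub-case (only $\N^c(\rho)$ rank-deficient) you swapped the two sides of \eqref{Cor} --- there it is the right-hand side that vanishes and the left-hand side that is non-negative --- but your conclusion that this sub-case is vacuous is correct, and the paper disposes of it the same way via the convention $\Pi=W^{(0)}_{\N}=0$.
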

\begin{proof}
We only need to consider the second derivative of $f(\varepsilon)$. Using the second derivative of entropy in Proposition~\ref{DerEntropy}, the second-order derivative of $f''(\varepsilon)$ is
\begin{equation*}
        \begin{aligned}
            f''(\varepsilon)  = \sum_{st} \bigg[2\mu_{st2} \log(\mu_{st}(\varepsilon)) + \frac{\mu_{st1}^2}{\mu_{st}(\varepsilon)}\bigg] - \sum_{ij}\bigg[2\lambda_{ij2}\log(\lambda_{ij}(\varepsilon)) + \frac{\lambda_{ij1}^2}{\lambda_{ij}(\varepsilon)}\bigg] + \mathcal{O}(\varepsilon)\,. 
        \end{aligned}
\end{equation*}
Fixing $\mu_s \neq 0$, we have
\[
\lim_{\varepsilon\rightarrow 0^+}\log \mu_{st}(\varepsilon) = \log\mu_s\, ,\quad \text{and} \quad \lim_{\varepsilon\rightarrow 0^+} 1/\mu_{st}(\varepsilon) = 1/\mu_s\,,
\]
thus 
\[
\lim_{\varepsilon\rightarrow0^+} \sum_t \Bigg[2\mu_{st2} \log(\mu_{st}(\varepsilon)) + \frac{\mu_{st1}^2}{\mu_{st}(\varepsilon)}\bigg]
\]
is bounded. Therefore, we can focus on the term corresponding to $\mu_s = 0$ or $\lambda_i = 0$, which may result in the limit being infinite. In particular, we can only consider that both $\N(\rho)$ and $\N^c(\rho)$ are not of full rank, otherwise, for example, if $\N(\rho)$ is of full rank, we then set $\Pi = W_\N^{(0)} = 0$ in following discussion.   

Since $\N(\rho)$ is not of full rank we assume $\lambda_0 = 0$, without loss of generality, which has multiplicity $n_0$, and $P_0 = \Pi$ is the associated projection onto the kernel. Under perturbations, the eigenvalue $\lambda_0$ which split into the perturbative eigenvalues of $\N(\rho(\varepsilon))$ as $\lambda_{0j}(\varepsilon)$ for $j = 1,\cdots,n_0$, as well as let $\mu_0 = 0$ is the eigenvalue of $\N^c(\rho)$ with multiplicity $m_0$ corresponding to $Q_0 = \Pi^c$, and split into the perturbative eigenvalue of $\N^c(\rho(\varepsilon)$ as $\mu_{0t}(\varepsilon)$ for $t=1,\cdots,m_0$. These perturbative eigenvalues $\lambda_{0j}(\varepsilon)$ and $\mu_{0t}(\varepsilon)$ may then be expanded as 
\begin{equation*}
    \lambda_{0j}(\varepsilon) = \lambda_{0j1}\varepsilon + \lambda_{0j2} \varepsilon^2 + \cdots \,,\quad \mu_{0t}(\varepsilon) = \mu_{0t1}\varepsilon + \mu_{0t2}\varepsilon^2 + \cdots\,
\end{equation*}
for $j = 1,\cdots,n_0$ and $t=1,\cdots,m_0$. We now consider the limit $\lim_{\varepsilon\rightarrow 0^+} \sum_{j}\bigg[2\lambda_{0j2}\log(\lambda_{0j}(\varepsilon)) + \frac{\lambda_{0j1}^2}{\lambda_{0j}(\varepsilon)}\bigg]$.

First, as $\tr(\Pi\, \N(A^{(1)})) = 0$, we find that
\begin{equation*}
    \lim_{\varepsilon \rightarrow 0^+} \sum_j \frac{\lambda_{0j1}^2}{\lambda_{0j}(\varepsilon)} = \lim_{\varepsilon\rightarrow 0^+} \frac{\tr\big(\Pi\, \N(A^{(1)})\big)}{\varepsilon} = 0\,.
\end{equation*}
As $\sum_j \lambda_{0j2} = \tr\Big(\Pi \N(A^{(2)}) - W^{(0)}_{\N}\Big)$ and $\log \lambda_{0j}(\varepsilon) \sim \log\varepsilon$ whether or not $\lambda_{0j1}$ equals to $0$, we have 
\begin{equation*}
     \sum_{j} \lambda_{0j2} \log \big(\lambda_{0j}(\varepsilon)\big) \sim  \tr\Big(\Pi \N(A^{(2)}) - W^{(0)}_{\N}\Big) \log \varepsilon\,.
\end{equation*}
Therefore, together with the same result for the sum with respect to $\mu_{0t}(\varepsilon)$, we find $ \lim_{\varepsilon \rightarrow 0^+} f''(\varepsilon) > 0$ if Eq.~\eqref{Cor} holds. Since $ \lim_{\varepsilon \rightarrow 0^+} f'(\varepsilon) = 0$ and $f''(\varepsilon)$ are continuous, we can find an interval $(0,\eta_1)$ so that $f''(\varepsilon) > 0$, which implies $f'(\varepsilon)$ is positive at an interval $(0,\eta_2)$. We have then shown there exists a positive value of $f(\varepsilon)$, as desired.
\end{proof}

\section{The proof of Criterion~\ref{cri3}}

Criterion~\ref{cri3} corresponds to the situation when $\N(\rho)$ and $\N^c(\rho)$ are both of full rank. Here we prove a Theorem which states the condition for the second-order derivative $f''(0) > 0$. Then together with the condition on $A^{(1)}$ so that $ \lim_{\varepsilon \rightarrow 0^+} f'(\varepsilon) = 0$, the second part of the following implies the Criterion~\ref{cri3} in the main text.

\begin{thm}\label{thm2}
    Let $\rho$ be an initial state for a quantum channel $\N$, let $\rho(\varepsilon)$ be the perturbative expansion of $\rho$ as defined in~\eqref{PertEx}, and let the $f(\varepsilon)$ be the function defined in~\eqref{function}. Then the following statements hold.
    \begin{enumerate}
        \item[(1).] Suppose $\N(\rho)$ and $\N^c(\rho)$ are not both of full rank, and suppose 
        \begin{equation}\label{notfull2}
        \tr \left(\Pi^c\, \N^c(A^{(1)})\right) >  \tr \left(\Pi\, \N(A^{(1)})\right)\,,
        \end{equation}    
        where $\Pi$ and $\Pi^c$ denote the orthogonal projectors onto $\emph{Ker}(\N(\rho))$ and $\emph{Ker}(\N^c(\rho))$ respectively. Then $f''(0)>0$.
        \item[(2).] If $\N(\rho)$ and $\N^c(\rho)$ are both of full rank, then $f''(0) > 0$ if and only if
        \begin{equation}\label{full2}
            \begin{aligned}
                 2\,\tr\Big(\N^c(A^{(2)}) \log \N^c(\rho)\Big) + \tr(W_{\N^c}) 
                 >  2\,\tr\Big(\N(A^{(2)}) \log \N(\rho)\Big) + \tr(W_{\N})\, ,
            \end{aligned}
        \end{equation}
        where $W_{\N}$ and $W_{\N^c}$ are defined in Eq.~\eqref{W}.
       \end{enumerate}
\end{thm}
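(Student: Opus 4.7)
The plan is to reduce both parts of Theorem~\ref{thm2} to a direct application of Proposition~\ref{DerEntropy} by observing that the expansions
\[
\N(\rho(\varepsilon))=\N(\rho)+\sum_{i\geq 1}\varepsilon^i\,\N(A^{(i)})\,,\qquad \N^c(\rho(\varepsilon))=\N^c(\rho)+\sum_{i\geq 1}\varepsilon^i\,\N^c(A^{(i)})
\]
are themselves valid perturbative expansions in the sense of~\eqref{PertEx}, since any quantum channel is trace-preserving and Hermiticity-preserving so each $\N(A^{(i)})$ (resp.\ $\N^c(A^{(i)})$) is traceless and Hermitian. Proposition~\ref{DerEntropy} therefore applies with the substitutions $\rho\mapsto\N(\rho)$ and $A^{(i)}\mapsto\N(A^{(i)})$ (and analogously for $\N^c$), where the operators appearing in its statement get replaced by $W^{(0)}_{\N},W_{\N}$ and $W^{(0)}_{\N^c},W_{\N^c}$ as defined in~\eqref{W0}\eqref{W}. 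Throughout I exploit the identity $f''(\varepsilon)=S''(\N(\rho(\varepsilon)))-S''(\N^c(\rho(\varepsilon)))$ together with linearity of the limit.

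For part~(1), I would apply part~1 of Proposition~\ref{DerEntropy} to the non-full-rank side(s). This yields
\[
f''(\varepsilon)\sim\frac{1}{\varepsilon}\Bigl[\tr\bigl(\Pi^c\,\N^c(A^{(1)})\bigr)-\tr\bigl(\Pi\,\N(A^{(1)})\bigr)\Bigr]
\]
as $\varepsilon\to 0^+$, so condition~\eqref{notfull2} is precisely the requirement that the bracketed coefficient be strictly positive, in which case $f''(\varepsilon)\to +\infty$ and is hence positive on a right-neighborhood of $0$. For part~(2), I would apply part~2 of Proposition~\ref{DerEntropy} to each of $\N(\rho(\varepsilon))$ and $\N^c(\rho(\varepsilon))$, obtaining finite limits of the form $-\tr(W_{\M})-2\tr(\M(A^{(2)})\log\M(\rho))$ for $\M\in\{\N,\N^c\}$. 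Subtracting yields $\lim_{\varepsilon\to 0^+}f''(\varepsilon)$ as exactly the signed difference appearing in~\eqref{full2}, so the strict inequality $f''(0)>0$ is equivalent to~\eqref{full2}.

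The only bookkeeping point that deserves care is the mixed subcase of part~(1) in which exactly one of $\N(\rho),\N^c(\rho)$ is of full rank. There the convention $\Pi=0$ (resp.\ $\Pi^c=0$) must be reconciled with the fact that the full-rank side contributes only a bounded limit coming from part~2 of Proposition~\ref{DerEntropy}, so that the $1/\varepsilon$ blow-up from the non-full-rank side dominates and condition~\eqref{notfull2} continues to capture the correct sign of $f''(\varepsilon)$ for small positive $\varepsilon$. Beyond this, the argument is essentially a transcription of Proposition~\ref{DerEntropy} into the output-side spectral data of $\N$ and $\N^c$, so I do not anticipate any deeper obstacle.
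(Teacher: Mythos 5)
Your proposal is correct and follows essentially the same route as the paper: write $f''(\varepsilon)=S''(\N(\rho(\varepsilon)))-S''(\N^c(\rho(\varepsilon)))$, push the perturbative expansion through the channels (noting each $\N(A^{(i)})$, $\N^c(A^{(i)})$ is traceless Hermitian), and apply Proposition~\ref{DerEntropy} case by case — the $1/\varepsilon$ divergence governed by $\tr(\Pi\,\N(A^{(1)}))$ versus $\tr(\Pi^c\,\N^c(A^{(1)}))$ in the non-full-rank cases (including the mixed case via the convention $\Pi=0$ or $\Pi^c=0$), and the finite limits $-\tr(W_{\M})-2\tr(\M(A^{(2)})\log\M(\rho))$ in the full-rank case, yielding the equivalence with~\eqref{full2}. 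This matches the paper's proof in both structure and detail.
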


\begin{proof}
Notice that $f''(\varepsilon) = S''(\N(\rho(\varepsilon))) - S''(\N^c(\rho(\varepsilon)))$ and $\N(\rho(\varepsilon)) = \N(\rho) + \sum_{i=1}^\infty \varepsilon^i \N(A^{(i)})$.    
\begin{itemize}
    \item[(1)] If $\N(\rho)$ and $\N^c(\rho)$ are both not of full rank, then let $\Pi$ and $\Pi^c$ denote the orthogonal projectors onto $\emph{Ker}(\N(\rho))$ and $\emph{Ker}(\N^c(\rho))$ respectively. According to Proposition~\ref{DerEntropy}, the derivatives $S''(\N(\rho(\varepsilon))) \sim -\tr\big(\Pi \N(A^{(1)}) \big)/\varepsilon$ and $S''(\N^c(\rho(\varepsilon))) \sim -\tr\big(\Pi^c \N^c(A^{(1)})\big)/ \varepsilon$. Therefore, in order to make sure $\lim_{\varepsilon\rightarrow 0^+} f''(\varepsilon) > 0$, we only need 
    \[
    \tr\big(\Pi\, \N(A^{(1)}) \big) < \tr\big(\Pi^c \, \N^c(A^{(1)}) \big)\, .
    \]
    \item[(2)] If one of $\N(\rho)$ and $\N^c(\rho)$ is of full rank, and another is not of full rank. Without loss of generality, suppose that $\N(\rho)$ is of full rank and $\N^c(\rho)$ is not of full rank, and let $\Pi^c$ denote the orthogonal projector onto $\text{Ker}(\N^c(\rho))$. According to Proposition~\ref{DerEntropy}, it follows that 
    $\lim_{\varepsilon\rightarrow 0^+} S''(\N(\rho(\varepsilon))) = -\tr(W_\N) - 2\tr(\N(A^{(2)}) \, \log\N(\rho))$, which is bounded. However, the derivative $S''(\N^c(\rho(\varepsilon))) \sim -\tr\big(\Pi^c \N^c(A^{(1)})\big)/ \varepsilon$, which can tend to infinity when $\tr\big(\Pi^c \N^c(A^{(1)})\big) > 0$. In such a case we have that the projector $\Pi$ onto $\text{Ker}(\N(\rho))$ is $\Pi = 0$, so that $\tr\big(\Pi\, \N(A^{(1)}) \big) = 0$ always holds. In summary, in order to ensure $\lim_{\varepsilon\rightarrow 0^+} f''(\varepsilon) > 0$, we only need 
    \[
    \tr\big(\Pi\, \N(A^{(1)}) \big) < \tr\big(\Pi^c \, \N^c(A^{(1)}) \big)\, ,
    \]
    which is same as the first case.
    \item[(3)]  If $\N(\rho)$ and $\N^c(\rho)$ are both of full rank, then according to Proposition~\ref{DerEntropy}, we have
    \[
    \lim_{\varepsilon\rightarrow 0^+} f''(\varepsilon) = -\tr(W_\N) - 2\,\tr\Big(\N(A^{(2)})\, \log \N(\rho)\Big) + \tr(W_{\N^c}) + 2\,\tr\Big(\N^c(A^{(2)})\,\log \N^c(\rho)\Big) \,,
    \]
    thus in order to ensure $\lim_{\varepsilon\rightarrow 0^+} f''(\varepsilon) > 0$, we only need Eq.~\eqref{full2}.
\end{itemize}
\end{proof}

When $\N(\rho)$, $\N^c(\rho)$ are both of full rank, in order to ensure $\lim_{\varepsilon\rightarrow 0^+} f'(\varepsilon) = 0$, the Hermitian traceless operator $A^{(1)}$ satisfies the equality $\tr\Big(\N^c(A^{(1)}) \log \N^c(\rho)\Big) = \tr\Big(\N(A^{(1)}) \log \N(\rho)\Big)$, we can also choose $A^{(2)}$ to satisfy this equality, so that the inequality~\eqref{full2} comes to $\tr(W_\N^c) > \tr(W_\N)$, which is the Criterion~\ref{cri3} in the main text. 

\end{document}